\documentclass[sn-mathphys-num]{sn-jnl}%
\usepackage{siunitx}
\usepackage{graphicx}%
\usepackage{multirow}%
\usepackage{amsmath,amssymb,amsfonts}%
\usepackage{mathrsfs}%
\usepackage[title]{appendix}%
\usepackage{xcolor}%
\usepackage{textcomp}%
\usepackage{manyfoot}%
\usepackage{booktabs}%
\usepackage{algorithm}%
\usepackage{algorithmicx}%
\usepackage{algpseudocode}%
\usepackage{listings}%
\usepackage{natbib}
\usepackage{hyperref} 
\usepackage{mathtools}
\usepackage{geometry}
\usepackage{url}
\usepackage{amsthm}
\usepackage{soul}
\newcommand{\comb}[1]{\textcolor{black}{#1}}

\newtheorem{theorem}{Theorem}%
\newtheorem{corollary}{Corollary} 
\newtheorem{lemma}{Lemma}

\usepackage{enumitem}

\makeatletter
\newcommand{\algmargin}{\the\ALG@thistlm}
\algnewcommand{\parState}[1]{\State%
  \parbox[t]{\dimexpr\linewidth-\algmargin}{\strut #1\strut}}

\raggedbottom

\begin{document}

\title{Counting Tree-Like Multigraphs with a Given Number of Vertices and Multiple Edges}


\author[1]{\fnm{Muhammad} \sur{Ilyas}}\email{milyas@math.qau.edu.pk}
\author[1]{\fnm{Seemab} \sur{Hayat}}\email{shayat@qau.edu.pk}
\author*[1]{\fnm{Naveed Ahmed} \sur{Azam}}\email{azam@amp.i.kyoto-u.ac.jp}

\affil[1]{\orgdiv{Department of Mathematics}, \orgname{Quaid-i-Azam University}, \city{Islamabad}, \country{Pakistan}}


\abstract{The enumeration of chemical graphs is an important topic in cheminformatics and bioinformatics, particularly in the discovery of novel drugs. These graphs are typically either tree-like multigraphs or composed of tree-like multigraphs connected to a core structure. In both cases, the tree-like components play a significant role in determining the properties and activities of chemical compounds.
This paper introduces a method based on dynamic programming to efficiently count tree-like multigraphs with a given number $n$ of vertices and $\Delta$ multiple edges. 
The idea of our method is to consider multigraphs as rooted multigraphs by selecting their unicentroid or bicentroid as the root, and define their canonical representation based on maximal subgraphs rooted at the children of the root. 
This representation guarantees that our proposed method will not repeat a multigraph in the counting process. 
Finally, recursive relations are derived based on the number of vertices and multiple edges in the maximal subgraphs rooted at the children of roots. 
These relations lead to an algorithm with a time complexity of $\mathcal{O}(n^2(n + \Delta (n + \Delta^2 \cdot \min\{n, \Delta\})))$ and a space complexity of $\mathcal{O}(n^2(\Delta^3+1))$.
Experimental results show that the proposed algorithm efficiently counts the desired multigraphs with up to 170 vertices and 50 multiple edges in approximately 930 seconds, confirming its effectiveness and potential as a valuable tool for exploring the chemical graph space in novel drug discovery.}


\keywords{Trees, Chemical Graphs, Enumeration, Dynamic Programming, Isomorphism}

\maketitle
\section{Introduction}
The counting and generation of chemical compounds have a rich history and play a vital rule in applications such as novel drug design~\cite{blum2009970, lim2020scaffold} and structure elucidation~\cite{meringer2013small}. One way to tackle the challenge of counting and generating chemical compounds is to consider it as the problem of enumerating graphs under specific constraints. Enumeration with constraints has been extensively explored for various purposes, including the virtual exploration of the chemical universe~\cite{fink2007virtual, mauser2007chemical} and the reconstruction of molecular structure from their signatures~\cite{faulon2003signature, hall1993design}. 

Numerous techniques have been developed to enumerate chemical compounds~\cite{benecke1995molgen+, peironcely2012omg}. These techniques can be categorized into two classes. The first class contain enumeration methods that focuses on general graph structure~\cite{benecke1995molgen+, peironcely2012omg}. The second class consists of techniques specifically focused on enumerating certain restricted chemical compounds~\cite{akutsu2013comparison}. A contribution to this area was made by 
 Cayley~\cite{cayley1875analytical}, who laid the foundation for counting alkane isomers. Since then, a variety of various approaches have emerged. 

Chemical graphs can be classified into two categories based on the presence of rings or cycles. A graph without any rings is known as a tree-like multigraph. 
Two representative examples, such as Aspartic Acid~{\tt C$_{4}$H$_{7}$NO$_{4}$} which lacks a ring structure and resembles a tree-like multigraphs properties, is shown in Fig.~\ref{fig:peptide}(a).  and Homovanillic Acid~{\tt C$_{9}$H$_{10}$O$_{4}$} which contains a ring structure shown in Fig.~\ref{fig:peptide}(b). Notably, the chemical properties and activities of the tree-like structure  are often influenced significantly by their attached components. 
For example, Zhu et al.~\cite{zhu2022molecular} developed a model to predict the chemical properties of graphs by analyzing the tree-like components connected to the cores of the underlying structures. This shows the role of substructures in determining the behaviour  of chemical compounds. 
\begin{figure}[h!]
   \centering  
   \includegraphics[width=0.6\textwidth]{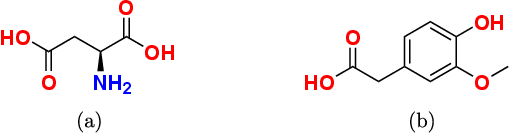}
  \caption{Chemical structures of compounds from PubChem database: (a) Aspartic Acid~{\tt C$_{4}$H$_{7}$NO$_{4}$}, a tree-like multigraph, PubChem CID: 5960; (b)~Homovanillic Acid~{\tt C$_{9}$H$_{10}$O$_{4}$}, which includes a ring structure, Pubchem CID: 1738.}
\label{fig:peptide}
\end{figure}

Two widely used techniques for enumerating chemical graphs are Pólya's enumeration theorem~\cite{polya2012combinatorial, polya1937kombinatorische} and branch-and-bound algorithms.
The key idea behind Pólya’s theorem is to count the number of distinct graphs by constructing a generating function based on cycle index of the symmetry group associated with the graphs. This method ensures that isomorphic graphs are counted only once, providing an efficient way for enumerating chemically relevant structures.
The computation of Pólya’s theorem heavily depends on the symmetry group of the underlying structure, which becomes challenging to handle for complex graphs. Calculating the cycle index for large or intricate symmetry groups can be computationally demanding.

Branch-and-bound algorithms have been extensively studies for enumeration of chemical graphs, particularly those tree-like structures. Akutsu and Fukagawa~\cite{akutsu2005inferring} showed that  problem of enumerating tree-like chemical graphs is NP-hard and developed a branch-bound algorithm to address this challenge~\cite{akutsu2007inferring}. Building on this work, Fujiwara et al.~\cite{fujiwara2008enumerating} proposed an improved algorithm  based on the two tree enumeration method by
Nakano and Uno~\cite{nakano2003efficient, nakano2005generating}, demonstrating comparable performance to the fatest existing algorithms~\cite{aringhieri2003chemical}. Subsequent studies introduced further refinements. Ishida et al.~\cite{ishida2008improved} presented two branch-and-bound algorithms for enumerating tree-like graphs using path frequencies, while Shimizu et al.~\cite{shimizu2011enumerating} and Suzuki et al.~\cite{suzuki20122} developed approaches for enumerating all tree-like (acyclic) chemical graphs based on feature vectors. Despite these advancements, branch-and-bound methods often suffer from high computational  costs, that compute the total number of solutions only after enumerating all relevant graphs. Furthermore, that frequently generate unnecessary intermediate structures, making them inefficient for large graphs and complex chemical graphs. While branch and bound are effective for small and medium sized tree-like compounds, their limitations highlight the need for more efficient approaches.

Dynamic programming (DP) is an efficient and widely used technique for discrete optimization problems~\cite{bellman1957dynamic}. Unlike branch-and-bound, which often generate unnecessary intermediate structure and take high computational costs for large graphs, Dp addresses these challenges  by breaking problems into smaller subproblems and establishing recursive relations between subproblems. This make DP particularly effective for problems involving graph structure. Akutsu et al.~\cite{akutsu2012inferring} utilized DP algorithms for tree-like graphs with a maximum degree bounded by a constant, providing a significant improvement in computational efficiency. Dp has also been applied to surface-embedded graphs with bounded branchwidth employing  surface cut decomposition techniques to reduce complexity to single-exponential dependence on branchwidth, $2^{\mathcal{O}(k)}.n$~\cite{rue2014dynamic}. This advancement allows DP to handle a broader range of graph structures efficiently.
In the context of chemical graph enumeration, DP has been studied in detail. For instance, it has been used to enumerate graphs efficiently~\cite{imada2011efficient, masui2009enumeration}.  
Recently, Azam et al.~\cite{azam2020efficient} extended this approach to count all tree-like graphs with a given number of vertices and self-loops. This shows that DP is efficient, especially in solving problems, where branch-and-bound, face difficulties.

In this paper, we focus on enumerating tree-like multigraphs with a given number of $n$ vertices, $\Delta$ multiple edges  and no self-loops, using DP. 
For this purpose, we treat each graph as a rooted graph by designating either a unicentroid or a bicentroid as the root. Recursive relations are derived based on the maximal subgraphs attached to the roots of the underlying graphs, which form the foundation of DP algorithm. To avoid redundant solutions, we introduce a canonical ordering for the maximal subgraphs. We analyze the theoretical time and space complexities of the proposed algorithm, which are 
$\mathcal{O}(n^2(n + \Delta (n + \Delta^2 \cdot \min\{n, \Delta\})))$ and $\mathcal{O}(n^2(\Delta^3+1))$, respectively. 
The proposed algorithm has been implemented to count the desired multigraphs for various values of $n$ and $\Delta$, demonstrating its efficiency in practical applications.

\noindent
The structure of the paper is as follows: Preliminaries are discussed in Section~\ref{preliminaries}. A recursive relation for enumerating tree-like multigraphs for a given number of vertices and multiple edges, which serves as a foundation for designing a DP-based algorithm for the counting are discussed in Section~\ref{counting}. Experimental results of tree-like multi graphs are discussed in Section~\ref{exp}. Conclusion with future directions is provided in Section~\ref{conclusion}.
\section{Preliminaries}\label{preliminaries}
A \textit{graph} is an ordered pair $G$ = ($V$, $E$), where $V$ is a set of vertices and  $E$ is a set of edges. 
For a graph $G$, we denote by $V(G)$ and $E(G)$ the sets of vertices and edges, respectively. 
An edge between vertices $u$ and $v$ is denoted by $uv$. 
If two vertices $u$ and $v$ in a graph are the endpoints of an edge $uv$, they are said to be \textit{adjacent} vertices. 
An edge $uv$ is said to be incident to the vertices $u$ and $v$. 
An edge that starts and end at the same vertex, i.e., $uu$ is called a {\textit self-loop}. 
The {\em multiplicity} of an edge $uv$ is defined \comb{as the number of multiple edges between $u$ and $v$, and is denoted by  ${\rm e}(uv)$. In Fig.~\ref{fig:multigraph}, the multiplicity of the edge $v_1v_2$ is 0, as there are no multiple edges between $v_1$ and $v_2$. Similarly, the multiplicity of the edge $v_2v_3$ is 1, as there is one additional edge between $v_2$ and $v_3$.}
A graph with no self-loops and multiple edges is called {\textit{simple graph}. 
A graph with multiple edges or having self-loops is called a
\textit{multigraph}.  
Henceforth, we only consider graphs with multiple edges and no self-loops.

Let $M$ be a multigraph.  \comb{An illustration of a multigraph $M$ is given in Fig.}~\ref{fig:multigraph}.
The set of vertices incident to a vertex $v$ in $M$, excluding $v$ itself, is called the set of {\em neighbors} of $v$ and is denoted by $N(v)$. The {\it degree} of a vertex $v$ in $M$ is defined to be the size of $N(v)$ and is denoted by deg$_M(v)$. For instance, as shown in Fig.~\ref{fig:multigraph}, \comb{the neighbors of $v_2$ are $v_1$, $v_3$ and $v_4$, so deg$_M(v_2)=3$.} A \textit{subgraph} of $
M$ is a graph $H = (V', E')$ such that $V' \subseteq V$ and $E' \subseteq E$.
A \textit{path} $P(u, v)$ between vertices $u$ and $v$ is a sequence of vertices $w_1, w_2, \ldots, w_n$ where
$w_1=u$, $w_n = v$, and $w_iw_{i+1}$ is an edge for $i = 1, \ldots, n-1$. 
A path is called \textit{simple path} from $u$ to $v$ if each vertex in the path appears only once. 
A graph is said to be {\em connected} if  there exists a path between any two distinct vertices. 
A \textit{connected
component} of $M$ is a maximal connected subgraph of $M$.
Let $X \subseteq V(M)$. 
The induced subgraph $M[X]$ is the subgraph of $M$ with vertex set $X$ and edge set consisting of all edges in $E(M)$ that have  both endpoints in $X$.  A {\em cycle} is a path that starts and ends at the same vertex, contains at least three vertices, and has no repeated edges and vertices except for the starting and ending vertex.
\begin{figure}[h!]
    \centering
    \includegraphics[width=0.4\textwidth]{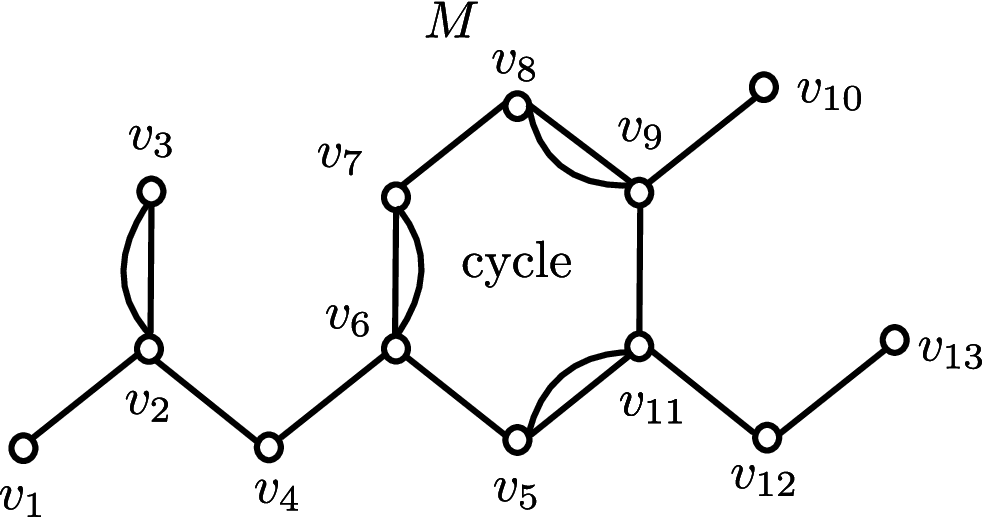}
    \caption{\comb{Illustration of a multigraph which is a 2D graphical structure of Homovanillic Acid shown in Fig.~\ref{fig:peptide}(b).}}
    \label{fig:multigraph}
\end{figure}
 
A \textit{tree-like multigraph} is a  connected multigraph with no cycles. 
A tree-like multigraph with a designated vertex is called a {\it rooted tree-like multigraph}. 
Let $M$ be a rooted tree-like multigraph. 
We denote by  $r_{M}$ the root of $M$. 
An {\it ancestor} of a vertex $v \in V(M) \setminus \{ r_M\}$ is a vertex, other than $v$, that lies on the path $P(v, r_M)$ from $v$ to the root $r_M$. Vertices for which $v$ serves as an ancestor are called the {\it descendants} of $v$. 
The {\it parent} $p(v)$ of a vertex $v \in V(M) \setminus \{ r_M\}$ is defined as the ancestor $u$ of $v$ such that $uv \in E(M)$. 
The vertex $v$ is referred to as a {\it child} of $p(v)$. 
Vertices in a rooted tree-like multigraph that share the same parent are called {\it siblings}. 
For a vertex $v \in V(M)$, let $M_v$ denote the descendant subgraph of $M_r$ rooted at $v$, which is the subgraph induced by $v$ and all its descendants.
Note that the descendant subgraph $M_v$ is a maximal rooted tree-like multigraph of $M$, rooted at $v,$ and induced by $v$ and its descendants. 
We denote by ${\rm v}(M_{v})$ the number of vertices and by ${\rm e}(M_{v})$, the number of \comb{ (multiple edges) in the descendant subgraph $M_v$. An illustration of a rooted tree-like multigraph is given in} Fig.~\ref{fig:multitree}.
\begin{figure}[h!]
    \centering
    \includegraphics[width=0.7\textwidth]{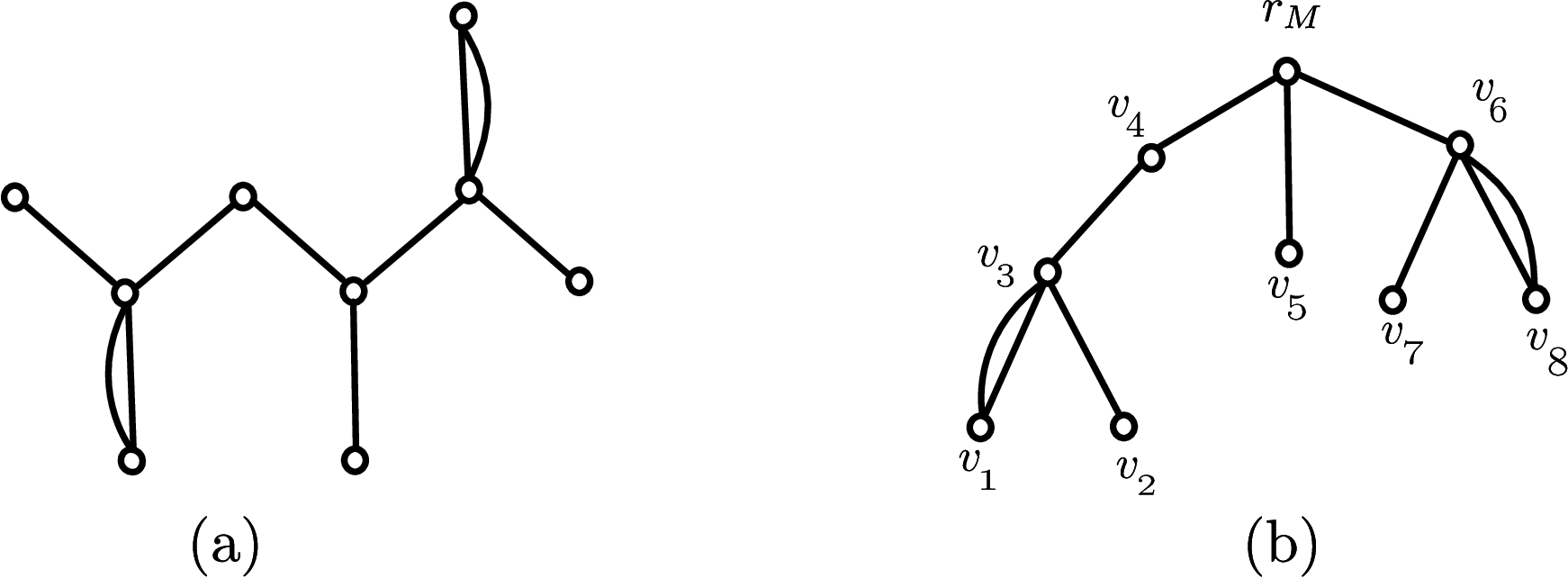}
    \caption{\comb{Rooted tree-like multigraph: (a) 2D graphical structure of Aspartic Acid shown in Fig.~\ref{fig:peptide}(a); (b) The multigraph with a specified root $r$. For vertex $v_3$, $v_4$ is its parent, $v_1$ and $v_2$ are its children. Vertex $v_3$ has no sibling, while $v_4$ and $r$ are its ancestors, and $v_1$ and $v_2$ are its descendants.}}
    \label{fig:multitree}
\end{figure}

By Jordan~\cite{jordan1869assemblages}, there exist either a  unique vertex or a unique edge in a tree with $n$ vertices, whose removal results in connected components with at most $\lfloor {(n-1)}/2 \rfloor$ or exactly $n/2$ vertices, respectively. 
This type of vertex is referred to as a {\it unicentroid}, \comb{ illustrated in Fig.~\ref{fig:centroid_article}(a) while such an edge is called the {\it bicentroid}, as shown in Fig.}~\ref{fig:centroid_article}(b)}. Collectively, they are known as the {\it centroid}. 
Note that a bicentroid only exists for an even number of vertices.
\begin{figure}[h!]
    \centering
    \includegraphics[width=0.85\textwidth]{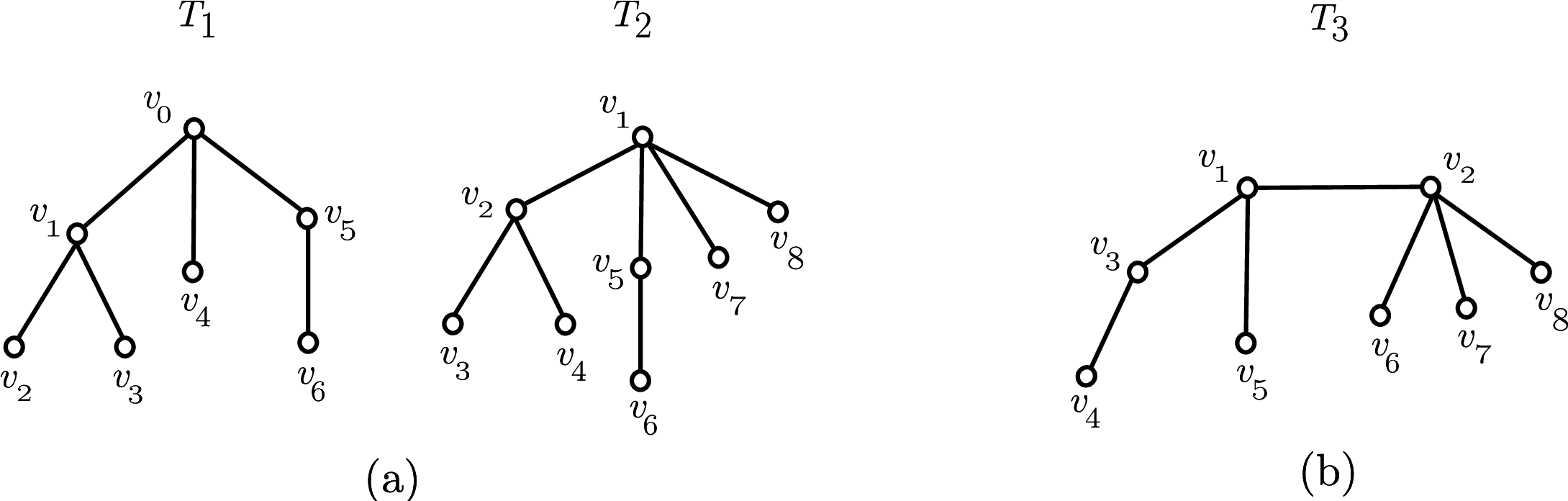}
    \caption{\comb{(a) Unicentroid $v_0$ in $T_1$ and $v_1$ in  $T_2$; (b) Bicentroid edge formed by vertex $v_1$ and $v_2$ in $T_3$. }}
    \label{fig:centroid_article}
\end{figure} 

Two rooted tree-like multigraphs, $M$ and $N$ with roots $r_M$ and $r_N$, resp., are said to be {\it isomorphic} if there exists a bijection $\varphi:~V(M) \to V(N)$ between their vertex sets such that $uv\in E(M)$ if and only if $\varphi(u)\varphi(v) \in E(N)$, 
$\varphi(r_M) = r_N$ and ${\rm e}(uv) = {\rm e}(\varphi(uv))$ for any two vertices $u, v\in V(M)$, where ${\rm e}(uv)$ denotes the number of edges between $u$ and $v$. An example of two isomorphic graphs are shown in Fig.~\ref{fig:isomorphictree}.
\begin{figure}[h!]
    \centering
    \includegraphics[width=0.7\textwidth]{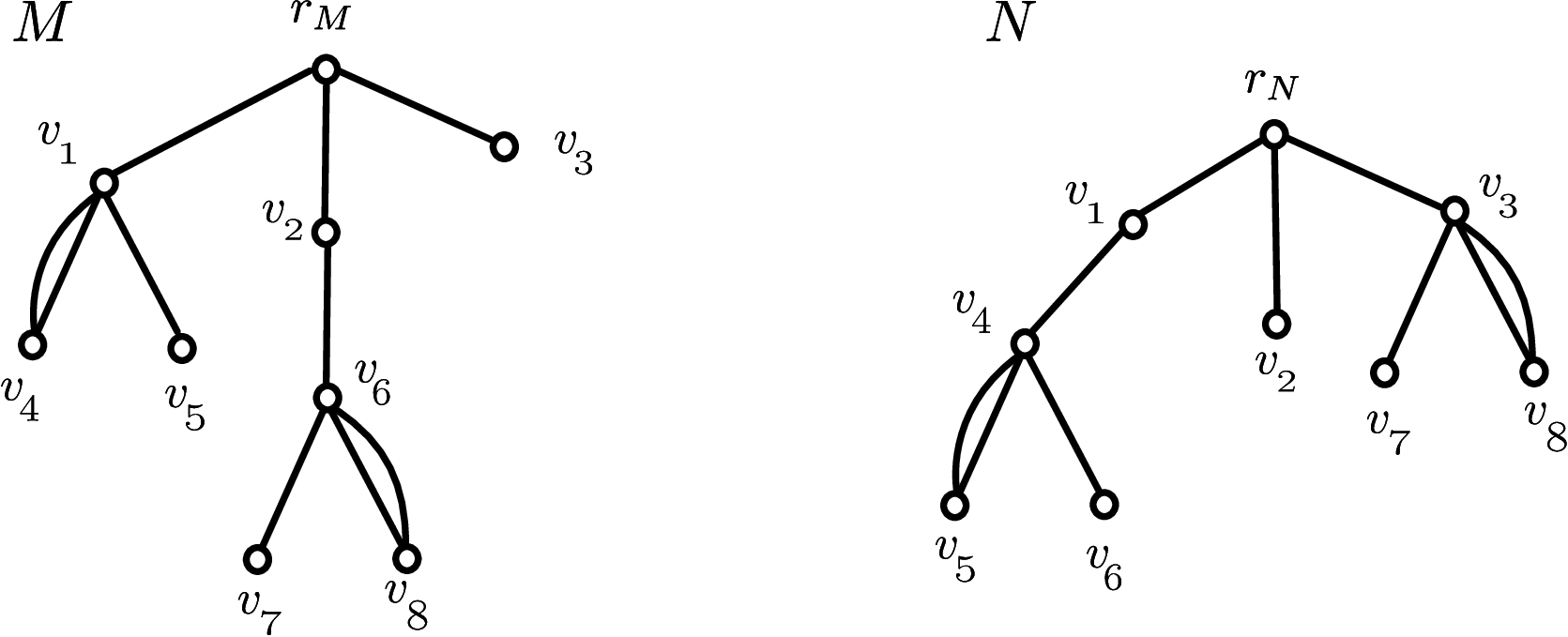}
    \caption{\comb{Illustration of rooted tree-like multigraphs  $M$ and $N$, which are isomorphic due to the mapping $\varphi$ such that $\varphi(r_M)=r_N$, $\varphi(v_1)=v_3$, $\varphi(v_2)=v_1$, $\varphi(v_3)=v_2$, $\varphi(v_4)=v_7$, $\varphi(v_5)=v_8$, $\varphi(v_6)=v_4$, $\varphi(v_7)=v_6$ and $\varphi(v_8)=v_5$,.  }}
    \label{fig:isomorphictree}
\end{figure}

Let $n \geq 0$ and $\Delta \geq 0$ be any two integers. We denote by 
$\mathcal{M}(n, \Delta)$ the set of all mutually non-isomorphic rooted tree-like multigraphs with $n$ vertices and $\Delta$ multiple edges. The size of $\mathcal{M}(n, \Delta)$ is denoted by $m(n, \Delta)$. 
\section{Proposed Method}\label{counting}
We focus on counting the number $m(n, \Delta)$ of mutually non-isomorphic tree-like multigraphs with a prescribed number of vertices, multiple edges and no self-loops. For $n = 0, 1$ and any $\Delta \geq 1$, 
$m(n, \Delta) = 0$, and $m(2, \Delta) = 1$. \comb{Fig.~\ref{fig:n_graphs}(a) illustrates the case $m(2, \Delta) = 1$. Members of the family $\mathcal{M}(3, 2)$ corresponding to $n=3$ and $\Delta=2$, are shown in Fig.~\ref{fig:n_graphs}(b).}
We develop a DP algorithm to compute $m(n, \Delta)$, which counts the number of 
non-isomorphic tree-like multigraphs.
\begin{figure}[h!]
   \centering  
   \includegraphics[width=0.55\textwidth]{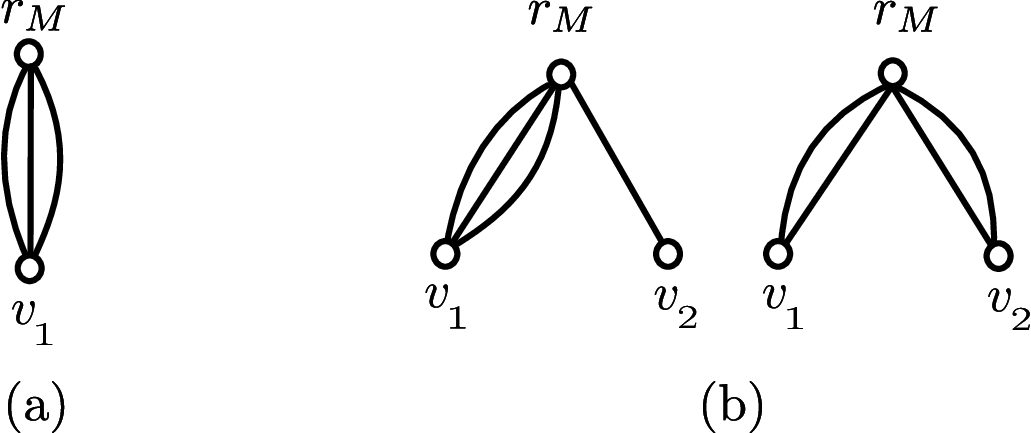}
  \caption{\comb{Illustration of the unique graph corresponding to $m(2, \Delta) = 1$ for $n=2$ and $\Delta\geq1$; (b) Examples of graphs of the family $\mathcal{M}(3, 2)$.} }
\label{fig:n_graphs}
\end{figure}

\noindent
\subsection{Canonical representation}  
For an enumeration method, it is essential to avoid generation of isomorphic graphs. 
For this purpose, we define a canonical representation for the graphs in 
$\mathcal{M}(n, \Delta)$. This representation utilizes information about the number of vertices and multiple edges in the descendant subgraphs, as well as  the number of multiple edges between the root and its children in the corrosponding underlying multigraphs. 
Formally, the {\it ordered representation} of a multigraph $M \in \mathcal{M}(n, \Delta)$ is defined as an ordered multigraph in which the descendant subgraphs are arranged from left-to-right according to the following criteria. \comb{ Consider a vertex of a multigraph and the descendant subgraphs of its children.  
(a) First order the descendant subgraphs in the non-increasing order of the number of vertices. 
(b) More than one subgraphs with the same number of vertices are further ordered in the non-increasing order of the number of multiple edges. 
(c) Lastly the subgraphs with the same number of multiple edges are arranged in the non-increasing order of the number of multiple edges between the root of the multigraph and the roots of descendant subgraphs.}
We define the {\it canonical representation} of a multigraph $M \in \mathcal{M}(n, \Delta)$ to be the ordered representation that satisfies (a)-(c) for each vertex.  
Henceforth, we assume that all graphs  in $\mathcal{M}(n, \Delta)$ are represented in their canonical form. 
For $n = 10 $ and $\Delta = 7 $, the canonical representation of a graph $M$ is shown in 
Fig.~\ref{fig:Tree_Like_Multigraphs}(a), while the non-canonical representation of $M$
is shown in Fig.~\ref{fig:Tree_Like_Multigraphs}(b). 
\begin{figure}[h!]
   \centering  
   \includegraphics[width=0.6\textwidth]{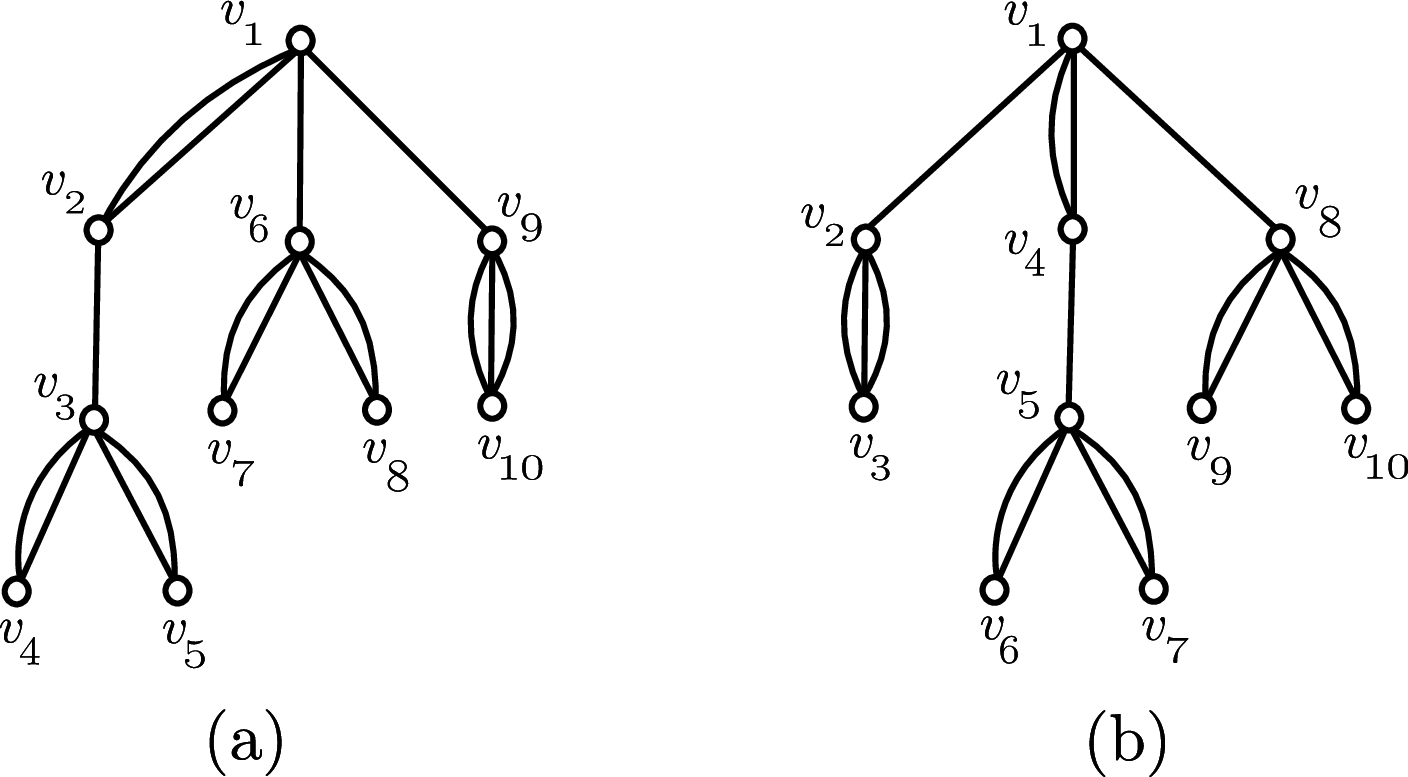}
  \caption{Representations of a graph $M$ with 10 vertices and 7 multiple edges: (a)~Canonical  representation of $M$; 
  (b)~Non-canonical representation of $M$.}
\label{fig:Tree_Like_Multigraphs}
\end{figure}

\noindent
\subsection{Subproblem} We define the subproblems for DP based on the descendant subgraphs with maximum numbers of vertices and multiple edges.  
For this purpose we define the following terms for each $M \in \mathcal{M}(n, \Delta)$, 
\begin{align*}
{\rm Max}_{\rm v}(M) & \triangleq 
\max\{\{ {\rm v}(M_{v}): v\in N(r_{M})\}\cup \{0\}\},\\
{\rm Max}_{\rm m}(M) & \triangleq 
\max \{\{{\rm e}(M_{v}): v\in N(r_{M}), {\rm v}(M_{v}) ={\rm Max}_{\rm v}(M)\} \cup\{0\}\},\\
{\rm Max}_{\rm l}(M) & \triangleq \max \{\{{\rm e}(r_{M} r_{M_{v}}): {\rm v}(M_{v})={\rm Max}_{\rm v}(M)\ \text{ and } {\ {\rm e}(M_{v})} ={\rm Max}_{\rm m}(M)\} \cup \{0\}\}.
\end{align*}
Intuitively, ${\rm Max}_{\rm v}(M)$ denotes the maximum number of vertices in the descendant subgraphs. 
${\rm Max}_{\rm m}(M)$ denotes the maximum number of multiple edges in the descendant subgraphs containing ${\rm Max}_{\rm v}(M)$ vertices. Finally,
${\rm Max}_{\rm l}(M)$ denote the maximum number of multiple edges between the root of $M$ and the roots of the descendant subgraphs containing  ${\rm Max}_{\rm v}(M)$ vertices and ${\rm Max}_{\rm m}(M)$ multiple edges. 
Note that for any multigraph $M \in \mathcal{M}(1, \Delta)$, it holds that  
${\rm Max}_{\rm v}(M) = 0$, $ {\rm Max}_{\rm m}(M) = 0$ and ${\rm Max}_{\rm l}(M) = 0$.
For the example multigraph in Fig.~\ref{fig:Tree_Like_Multigraphs}(a), 
${\rm Max}_{\rm v}(M) = 4$, 
${\rm Max}_{\rm m}(M) = 2$, and 
${\rm Max}_{\rm l}(M) = 1$. 

Let $k,d,\ell$ be any three integers such that $k \geq 1$ and $d, \ell \geq 0$. 
We define a subproblem of $\mathcal{M}(n, \Delta)$ for DP as follows:
\begin{align*}
&\mathcal{M}(n, \Delta ,{k}_{\leq}, {d}_{\leq}, {\ell}_{\leq})\triangleq 
\{M\in \mathcal{M}(n, \Delta) :
{\rm Max}_{\rm v}(M) \leq k,\ {\rm Max}_{\rm m}(M)\leq d, \ 
{\rm Max}_{\rm l}(M)\leq \ell\}.
\end{align*}%
Note that by the definition of $\mathcal{M}(n, \Delta ,{k}_{\leq}, {d}_{\leq}, {\ell}_{\leq})$ it holds that 
\begin{enumerate}
[label=\textnormal{(\roman*)}, ref=(\roman*), font=\upshape]
\item $\mathcal{M}(n, \Delta ,{k}_{\leq}, {d}_{\leq}, {\ell}_{\leq}) = \mathcal{M}(n, \Delta, {n-1}_{\leq}, {d}_{\leq}, {\ell}_{\leq})$ if $k \geq n$;
\item $\mathcal{M}(n, \Delta ,{k}_{\leq}, {d}_{\leq}, {\ell}_{\leq}) =
\mathcal{M}(n, \Delta ,{k}_{\leq}, {\Delta}_{\leq}, {\ell}_{\leq})$ if $d \geq{ \Delta + 1}$; 
\item $\mathcal{M}(n, \Delta , {k}_{\leq}, {d}_{\leq}, {\ell}_{\leq})=
\mathcal{M}(n, \Delta, {k}_{\leq}, {d}_{\leq}, {\Delta}_{\leq})$ if $\ell \geq {\Delta+1}$; and
\item $\mathcal{M}(n, \Delta)= \mathcal{M}(n, \Delta, {n-1}_{\leq}, {\Delta}_{\leq}, {\Delta}_{\leq})$.
\end{enumerate}
Consequently, we assume that $k \leq n - 1$, and $ d + \ \ell \leq \Delta$.
Moreover, by the definition it holds that 
$\mathcal{M}(n, \Delta, {k}_\leq, {d}_\leq, {\ell}_\leq) \neq \emptyset$ if $``n=1$ and $\Delta=0"$ or $``n-1 \geq k \geq 1"$ and $\mathcal{M}(n, \Delta, {k}_\leq, {d}_\leq, {\ell}_\leq) = \emptyset$ 
if $``n=1$ and $\Delta \geq 1"$ or $``n \geq2$ and $k = 0"$.

We define the subproblems of $\mathcal{M}(n, \Delta, {k}_\leq, {d}_\leq, {\ell}_\leq) $ based on the maximum number of vertices in the descendant subgraphs:
\begin{equation*}
\mathcal{M}(n, \Delta, {k}_{=}, {d}_{\leq}, {\ell}_{\leq})\triangleq 
\{M\in \mathcal{M}(n, \Delta, {k}_{\leq}, {d}_{\leq}, {\ell}_{\leq}) :
{\rm Max}_{\rm v}(M) = k\}.
\end{equation*}%
\noindent 
It follows from the definition of $\mathcal{M}(n, \Delta, {k}_{=}, {d}_{\leq}, {\ell}_{\leq})$ that $\mathcal{M}(n, \Delta, {k}_{=}, {d}_{\leq}, {\ell}_{\leq}) \neq \emptyset$ if $``n = 1$ and $\Delta = 0"$ or $``n-1 \geq k \geq 1"$, and $\mathcal{M}(n, \Delta, {k}_{=}, {d}_{\leq}, {\ell}_{\leq}) = \emptyset$ if $``n=1$ and $\Delta \geq 1"$ or $`` n \geq2$ and $k = 0."$ In addition, we have the following recursive relation:
\begin{align}
&\mathcal{M}(n, \Delta, {k}_{\leq}, {d}_{\leq}, {\ell}_{\leq})= 
\mathcal{M}(n, \Delta, 0_{=}, {d}_{\leq}, {\ell}_{\leq}) \text{ if $k = 0$}, \label{equ1} \\ 
&\mathcal{M}(n, \Delta, {k}_{\leq}, {d}_{\leq}, {\ell}_{\leq})= 
\mathcal{M}(n, \Delta, {k-1}_{\leq}, {d}_{\leq}, {\ell}_{\leq})\cup \mathcal{M}(n, \Delta, {k}_{=}, {d}_{\leq}, {\ell}_{\leq})  
\text { if $k \geq 1$}, \label{equ1p}
\end{align}%
where 
$\mathcal{M}(n, \Delta, {k-1}_{\leq}, {d}_{\leq}, {\ell}_{\leq})\cap 
\mathcal{M}(n, \Delta, {k}_{=}, {d}_{\leq}, {\ell}_{\leq}) = \emptyset$ for $k \geq 1$. \comb{This recursive relation is shown in Fig.}~\ref{fig:dem}.
\begin{figure}[t!]
   \centering  
   \includegraphics [width=0.9\textwidth]{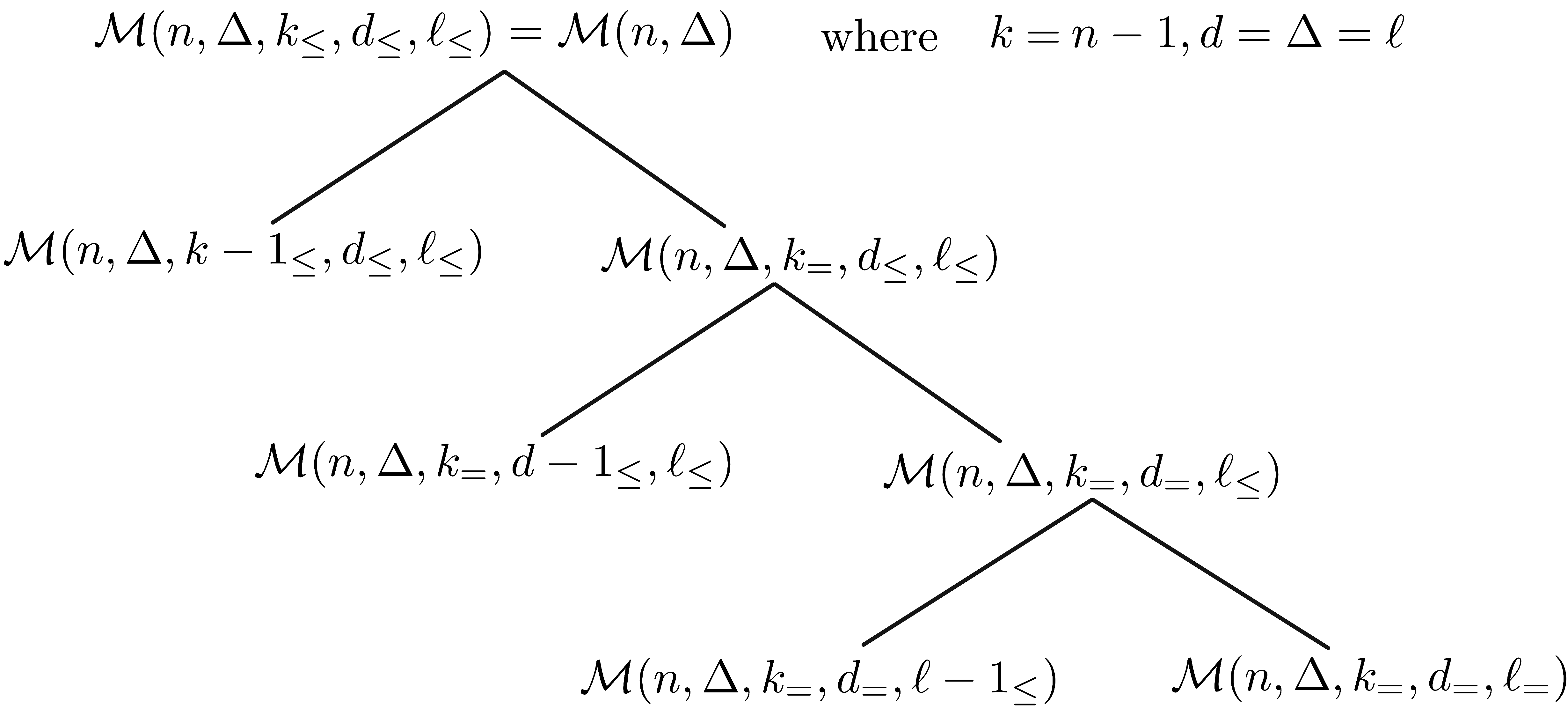}
  \caption{\comb{Illustration of the recursive relations as discussed in Eqs.~(\ref{equ1})-(\ref{equ3p}).}}
\label{fig:dem}
\end{figure}

We define the subproblems of $ \mathcal{M}(n, \Delta, {k}_{=}, {d}_{\leq}, {\ell}_{\leq})$ based on the maximum number of multiple edges in the descendant subgraphs:
\begin{equation*}
\mathcal{M}(n, \Delta, {k}_{=}, {d}_{=}, {\ell}_{\leq} )\triangleq 
\{ M\in 
\mathcal{M}(n, \Delta, {k}_{=}, {d}_{\leq}, {\ell}_{\leq}) :
{\rm Max}_{\rm m}(M)= d\}.
\end{equation*}%
By the definition of $\mathcal{M}(n, \Delta, {k}_{=}, {d}_{=}, {\ell}_{\leq})$, it holds that $\mathcal{M}(n, \Delta, {k}_{=}, {d}_{=}, {\ell}_{\leq}) \neq \emptyset$ if $``n = 1$ and $\Delta = 0"$ or $ ``n-1 \geq k \geq 1"$ and $\mathcal{M}(n, \Delta, {k}_{=}, {d}_{=}, {\ell}_{\leq}) = \emptyset$ in the case of $``n = 1$ and $\Delta \geq 1"$ or $`` n \geq2$ and $k = 0"$.\\
We also obtain the following relation for
$\mathcal{M}(n, \Delta, {k}_{=}, {d}_{\leq}, {\ell}_{\leq})$:
\begin{align}
&\mathcal{M}(n, \Delta, {k}_{=}, {d}_{\leq}, {\ell}_{\leq})= \mathcal{M}(n, \Delta, {k}_{=}, 0_{=}, {\ell}_{\leq})  
\text{ if $d = 0$}, \label{equ2} \\ 
&\mathcal{M}(n, \Delta, {k}_{=}, {d}_{\leq}, {\ell}_{\leq})= 
\mathcal{M}(n, \Delta, {k}_{=}, {d-1}_{\leq}, {\ell}_{\leq})\cup \mathcal{M}(n, \Delta, {k}_{=}, {d}_{=}, {\ell}_{\leq})  \text{ if $d \geq 1$}, \label{equ2p}
\end{align}
where $\mathcal{M}(n, \Delta, {k}_{=}, {d-1}_{\leq}, {\ell}_{\leq})\cap \mathcal{M}(n, \Delta, {k}_{=}, {d}_{=}, {\ell}_{\leq}) = \emptyset$ for $d \geq 1$. \comb{This recursive relation is shown in Fig.}~\ref{fig:dem}.\\

Similarly, we define the subproblems of 
$\mathcal{M}(n, \Delta, {k}_{=}, {d}_{=}, {\ell}_{\leq})$ based on the 
maximum number of multiple edges between the root and its children in the multigraphs:
\begin{equation*}
\mathcal{M}(n, \Delta, {k}_{=}, {d}_{=}, {\ell}_{=})\triangleq 
\{M\in \mathcal{M}(n, \Delta, {k}_{=}, {d}_{=}, {\ell}_{\leq}) :
{\rm Max}_{\rm l}(M) = \ell\}.
\end{equation*}%
\noindent By the definition of $\mathcal{M}(n, \Delta, {k}_{=}, {d}_{=}, {\ell}_{=})$, it holds that $\mathcal{M}(n, \Delta, {k}_{=}, {d}_{=}, {\ell}_{=}) \neq \emptyset$ if $``n= 1$ and $\Delta=0"$ or $``n-1 \geq k \geq 1"$ and $\mathcal{M}(n, \Delta, {k}_{=}, {d}_{=}, {\ell}_{=}) = \emptyset$ if $``n=1$ and $\Delta \geq 1"$ or $`` n \geq2$ and $k = 0"$.\\
We obtain the following relation for $\mathcal{M}(n, \Delta, {k}_{=}, {d}_{=}, {\ell}_{\leq})$:
\begin{align}
&\mathcal{M}(n, \Delta, {k}_{=}, {d}_{=}, {\ell}_{\leq})= \mathcal{M}(n, \Delta, {k}_{=}, {d}_{=}, 0_{=}) 
\text{ if $\ell= 0$}, \label{equ3} \\
&\mathcal{M}(n, \Delta, {k}_{=}, {d}_{=}, {\ell}_{\leq})= 
\mathcal{M}(n, \Delta, {k}_{=}, {d}_{=}, {\ell-1}_{\leq})\cup \mathcal{M}(n, \Delta, {k}_{=}, {d}_{=}, {\ell}_{=})  \text{ if $\ell\geq 1$}, \label{equ3p}
\end{align}
where $\mathcal{M}(n, \Delta, {k}_{=}, {d}_{=}, {\ell-1}_{\leq})\cap \mathcal{M}(n, \Delta, {k}_{=}, {d}_{=}, {\ell}_{=}) = \emptyset$ for $\ell \geq 1$.
\comb{This recursive relation is shown in Fig.}~\ref{fig:dem}.
\subsection{Recursive relations} 
To obtain recursive relations for computing the size of 
$\mathcal{M}(n, \Delta, {k}_{=}, {d}_{=}, {\ell}_{=})$, it is essential to 
analyze the types of descendant subgraphs. 
Let $M$ be a multigraph such that $M \in \mathcal{M}(n, \Delta, {k}_=, {d}_=,  {\ell}_=)$. 
Then it is easy to observe that for any vertex $v\in N(r_{M})$, the descendant subgraph $M_{v}$ satisfies exactly one of the following four conditions:

\begin{enumerate}[label=\textnormal{(\arabic*)}, ref=(\arabic*)]

\item \label{con:condit11}$ {\rm v}(M_{v}) = k$, ${\rm e}(M_{v}) = d$ and ${\rm e}({r_{M} r_{M_{v}}}) = \ell$. 

\item \label{con:condit12}$ {\rm v}(M_{v}) = k$, ${\rm e}(M_{v}) = d$ and $0 \leq {\rm e}({r_{M} r_{M_{v}}}) < \ell$. 

\item \label{con:condit13}$ {\rm v}(M_{v}) = k$, $0 \leq {\rm e}(M_{v}) < d$ and $0 \leq {\rm e}({r_{M} r_{M_{v}}}) \leq \Delta$.

\item \label{con:condit14}${\rm v}(M_{v}) < k$, $0 \leq {\rm e}(M_{v}) \leq \Delta$  and $0 \leq {\rm e}({r_{M} r_{M_{v}}}) \leq \Delta$.
\end{enumerate}

These cases are demonstrated by an example of multigraph $M \in  \mathcal{M}(20, 13)$  in Fig.~\ref{fig:subtrees_conditions}.

\begin{figure}[h!]
    \centering
    \includegraphics[width=0.45\textwidth]{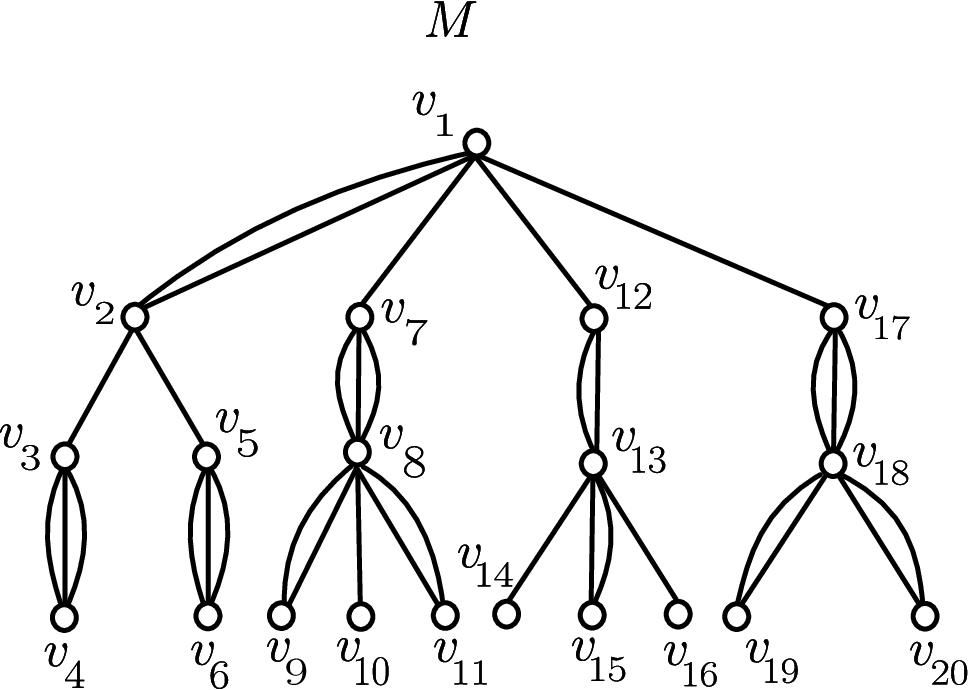}
    \caption{The descendant subgraphs rooted at $v_2$, $v_7$, $v_{12}$, and $v_{17}$ have following properties: for $v_2$, 
~$\left\vert {\rm v}(M_{v_2})\right\vert = 5$, ${\rm e}(M_{v_2})= 4$ and ${\rm e}({r_{M}} r_{M_{v_2}})$ $=$ $1$; for $v_7$, $\left\vert {\rm v}(M_{v_7})\right\vert =5$, ${\rm e}(M_{v_7})=4$ and
${\rm e}({r_{M} r_{M_{v_7}}}) = 0$; for $v_{12}$, $\left\vert {\rm v}(M_{v_{12}})\right\vert =5$, ${\rm e}(M_{v_{12}}) = 2$ and ${\rm e}({r_{M} r_{M_{v_{12}}}})= 0$; for $v_{17}$, $\left\vert {\rm v}(M_{v_{17}})\right\vert = 4, {\rm e}(M_{v_{17}}) = 4$ and ${\rm e}({r_{M} r_{M_{v_{17}}}}) = 0.$}
    \label{fig:subtrees_conditions}
\end{figure} 

We define the {\it residual multigraph} of $M \in \mathcal{M}(n, \Delta, {k}_=, {d}_=, {\ell}_=)$ to be the multigraph rooted at $r_M$ that is induced by the vertices 
${V}(M)\setminus \bigcup\limits_{\mathclap{\substack{v\in N(r_{M}),\\ 
 M_v \in \mathcal{M}(k, d, k-1_{\leq}, d_{\leq}, d_{\leq})}}}{ V}(M_v).$ 
 Basically, residual multigraph consists of those descendant subgraphs of $M$ whose structures are not clear. 
The residual multigraph of $M$ contains at least one vertex, which is the root of $M$. 
An illustration of a residual multigraph is given in Fig.~\ref{fig:residual_trees}. 

\begin{figure}[h!]
    \centering
    \includegraphics[width=0.35\textwidth]{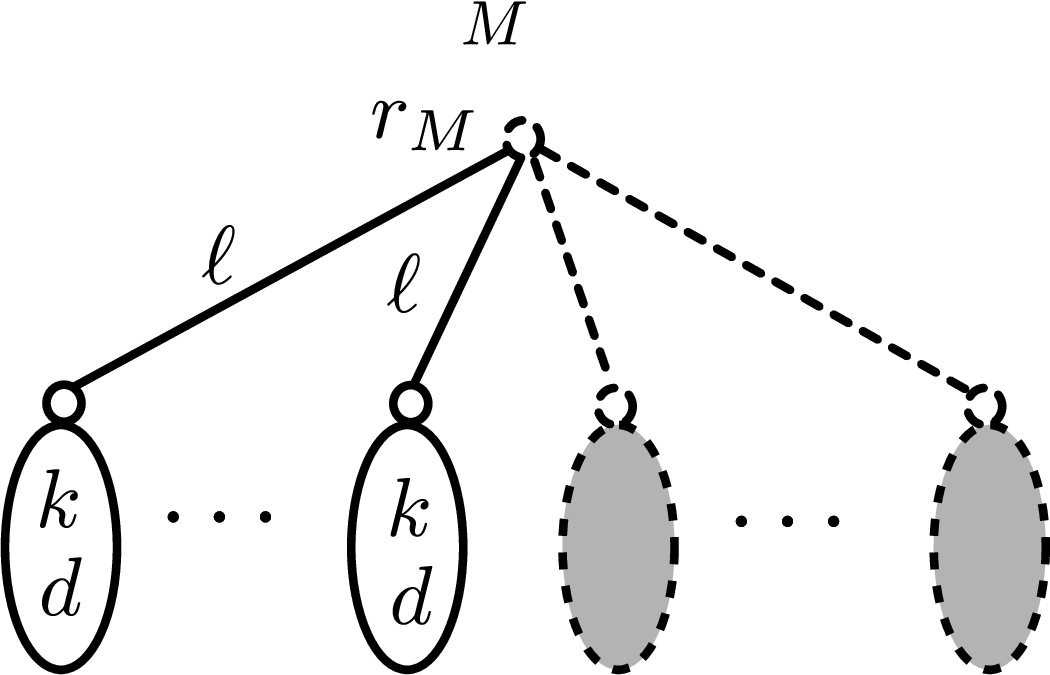}
    \caption{\comb{An illustration of the residual multigraph depicted with dashed line, and the descendant subgraph is depicted with solid lines. The descendant subgraph contains $k$ vertices, $d$ multiple edges, and $\ell$ multiple edges between the roots.}}
    \label{fig:residual_trees}
\end{figure}

A characterization of residual multigraph is discussed in Lemma~\ref{L_1}. 
\begin{lemma}
\label{L_1}
For any five integers $n \geq 3$, $k \geq 1$, $\Delta \geq d+\ell \geq 0$, and a multigraph $M\in 
\mathcal{M}(n, \Delta, {k}_{=}, {d}_=, {\ell}_=),$ let~$p = |\{ v\in N(r_{M}) :
 M_v \in \mathcal{M}(k, d, {k-1}_{\leq}, {d}_{\leq}, {d}_{\leq})\}|$. 
 Then it holds that
\begin{enumerate}[label=(\roman*), ref  = (\roman*), font=\upshape]
\item $1\leq  p \leq \lfloor (n-1)/k \rfloor$ with 
$p \leq \lfloor \Delta / (d+\ell) \rfloor $ when $d+\ell \geq 1$. \label{L_1_i}

\item \label{L_1_ii} The residual multigraph of $M$ belongs to exactly one of the following families:\\
$\mathcal{M}(n~-~pk, \Delta~-~p(d~+~\ell), {k}_=, {d}_=, \min{\{\Delta-p(d+\ell), \ell-1\}}_{\leq})$;\\ 
$\mathcal{M}(n~-~pk, \Delta~-~p(d~+~\ell), {k}_=, \min{\{\Delta-p(d+\ell), d-1\}}_{\leq}, \Delta~-~p(d~+~\ell)_{\leq})$; or \\
$
\mathcal{M}(n~-~pk, \Delta~-~p(d~+~\ell), \min\{n-pk-1, k-1\}_{\leq}, \Delta-p(d+\ell)_{\leq}, \Delta-p(d+\ell)_{\leq}).$
\end{enumerate}
\end{lemma}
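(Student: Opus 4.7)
The plan is to prove parts~\ref{L_1_i} and~\ref{L_1_ii} in sequence, using the definitions of ${\rm Max}_{\rm v}$, ${\rm Max}_{\rm m}$, ${\rm Max}_{\rm l}$ together with a case analysis based on the four descendant-subgraph types~\ref{con:condit11}--\ref{con:condit14}. Throughout, I interpret $p$ as counting precisely those children $v$ of $r_M$ whose $M_v$ has signature $(k,d,\ell)$, which is the reading that makes the edge accounting $\Delta - p(d+\ell)$ in part~\ref{L_1_ii} consistent.

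For part~\ref{L_1_i}, the lower bound $p \geq 1$ is immediate from the definitions: unwinding the three nested maxima ${\rm Max}_{\rm v}(M) = k$, ${\rm Max}_{\rm m}(M) = d$, ${\rm Max}_{\rm l}(M) = \ell$ forces the existence of at least one child $v$ of $r_M$ with ${\rm v}(M_v) = k$, ${\rm e}(M_v) = d$, and ${\rm e}(r_M r_{M_v}) = \ell$. For $p \leq \lfloor (n-1)/k \rfloor$, I use that the $p$ counted subgraphs $M_v$ are rooted at distinct children of $r_M$ and are therefore pairwise vertex-disjoint, each contains exactly $k$ vertices, and $r_M$ itself lies outside all of them, so $pk + 1 \leq n$. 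For $p \leq \lfloor \Delta/(d+\ell) \rfloor$ when $d+\ell \geq 1$, I count multiple edges: each counted subgraph contributes $d$ internal multiple edges and $\ell$ multiple edges on the connecting edge $r_M r_{M_v}$, and these contributions are disjoint across the $p$ subgraphs, yielding $p(d+\ell) \leq \Delta$.

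For part~\ref{L_1_ii}, I would first argue that the residual $R$ is itself a rooted tree-like multigraph with root $r_M$, since it is the induced subgraph on $\{r_M\}$ together with the vertex sets of the non-removed descendant subgraphs, all of which remain attached to $r_M$ through their original connecting edges. Its vertex count is $n - pk$ and its multiple-edge count is $\Delta - p(d+\ell)$ by the same accounting as in part~\ref{L_1_i}. Each child of $r_M$ in $R$ has $M_v$ falling in exactly one of the categories~\ref{con:condit12}--\ref{con:condit14}, and the three alternatives in the statement correspond to the trichotomy: some remaining child satisfies~\ref{con:condit12}, giving ${\rm Max}_{\rm v}(R) = k$, ${\rm Max}_{\rm m}(R) = d$ and ${\rm Max}_{\rm l}(R) \leq \ell - 1$; no remaining child satisfies~\ref{con:condit12} but some satisfies~\ref{con:condit13}, giving ${\rm Max}_{\rm v}(R) = k$ and ${\rm Max}_{\rm m}(R) \leq d - 1$; or only children of type~\ref{con:condit14} remain, giving ${\rm Max}_{\rm v}(R) \leq k - 1$. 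These cases are mutually exclusive and exhaustive, and the $\min$ expressions in the stated family memberships arise by pairing each case-specific bound with the trivial ceiling given by the total resources available in $R$, namely $\Delta - p(d+\ell)$ multiple edges and, in the third case, $n - pk - 1$ non-root vertices.

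The main obstacle I anticipate is the bookkeeping in part~\ref{L_1_ii}, specifically verifying that the stated $\min$ bounds are tight and handle the boundary situations cleanly: $n = pk + 1$, where $R$ collapses to the single vertex $r_M$ and the third case applies with all three maxima equal to zero; and $\Delta = p(d+\ell)$, where $R$ has no multiple edges and the relevant $\min$ expressions force the corresponding maxima to vanish. Confirming exhaustive mutual exclusivity also requires invoking the canonical representation so that each residual multigraph is placed unambiguously in exactly one of the three families.
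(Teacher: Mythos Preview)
Your proposal is correct and follows essentially the same approach as the paper's proof: the lower bound $p\geq 1$ comes from unwinding the nested maxima, the upper bounds from disjointness and resource counting, and the three families in part~\ref{L_1_ii} arise from the trichotomy on which of Conditions~\ref{con:condit12}--\ref{con:condit14} are realized among the remaining children of $r_M$. Your treatment is in fact more careful than the paper's (you make mutual exclusivity and exhaustiveness explicit, address the boundary cases $n=pk+1$ and $\Delta=p(d+\ell)$, and correctly flag the intended reading of $p$ that makes the $\Delta-p(d+\ell)$ accounting work); the only superfluous step is the closing appeal to the canonical representation, since membership in the three families is determined by the ${\rm Max}$ values of $R$ alone and does not depend on any ordering.
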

\begin{proof}
\begin{enumerate}[label=\textnormal{(\roman*)}, ref=(\roman*), font=\upshape]
\item Since $M \in \mathcal{M}(n, \Delta, {k}_=, {d}_=, {\ell}_=)$, 
 there exists at least one vertex $v \in N(r_M)$ such that  
 $M_v \in \mathcal{M}(k, d, {k-1}_{\leq}, {d}_{\leq}, {d}_{\leq})$. 
It follows that $p \geq 1$. 
It further asserts that $n-1 \geq pk$ and $\Delta \geq p(d~+~\ell)$. 
 This concludes that $p \leq \lfloor (n-1)/k \rfloor$ with 
$p \leq \lfloor \Delta / (d+\ell) \rfloor $ when $d~+~\ell \geq 1$.
\item  Let $R$ represents the residual multigraph of $M$. 
By the definition of $R$, it holds that 
$ R \in \mathcal{M}(n~-~pk,~ \Delta-p(d+\ell),~ n-pk-1_{\leq},~ {\Delta-p{(d+\ell)}}_{\leq},~ {\Delta-  p(d+\ell)}_{\leq})$.  
Moreover, for each vertex $v\in N(r_{M})\cap V(R),$ the descendant subgraph 
$M_{v}$ satisfies exactly one of the Conditions~\ref{con:condit12}-\ref{con:condit14}.
Now, if there exists a vertex $v\in N(r_{M})\cap V(R)$ such that $M_v$ satisfies Condition~\ref{con:condit12} as illustrated in Fig.~\ref{fig:Multigraphs}(a), then $\ell-1\geq 0$, and hence $R\in \mathcal{M}(n~-~pk, \Delta~-~p(d~+~\ell), {k}_=, {d}_=, \min\{\Delta-p(d+\ell), \ell~-~1\}_{\leq})$.
For $v\in N(r_{M})\cap V(R)$ such that $M_v$ satisfies Condition~\ref{con:condit13} as illustrated in Fig.~\ref{fig:Multigraphs}(b) i.e., ${\rm v}(M_v) = k$, $0 \leq {\rm e}(M_v) \leq \min\{\Delta-p(d+\ell), d-1\} $ and $ 0 \leq {\rm e}({r_{M} r_{M_{v}}}) \leq \Delta-p(d+\ell)$, then the residual multigraph  $R\in \mathcal{M}(n-pk, \Delta-p(d+\ell), {k}_=, \min\{\Delta-p(d+\ell), d-1\}_{\leq}, \Delta-p(d+\ell)_{\leq})$.
If $M_v$ satisfies Condition~\ref{con:condit14} as illustrated in Fig.~\ref{fig:Multigraphs}(c) i.e., ${\rm v}(M_v) < k$, $ 0 \leq {\rm e}(M_v) \leq \Delta-p(d+\ell)$ and $0 \leq {\rm e}({r_{M} r_{M_{v}}}) \leq \Delta-p(d+\ell) $ then by the definition of $R$, it holds that $R\in \mathcal{M}(n-pk, \Delta - p(d+\ell), {\min\{n-pk-1, k-1\}}_{\leq}, \Delta-p(d+\ell)_{\leq}, \Delta-p(d+\ell)_{\leq})$. 
\end{enumerate}
\begin{figure}[h!]
   \centering  
   \includegraphics [width=0.9\textwidth]{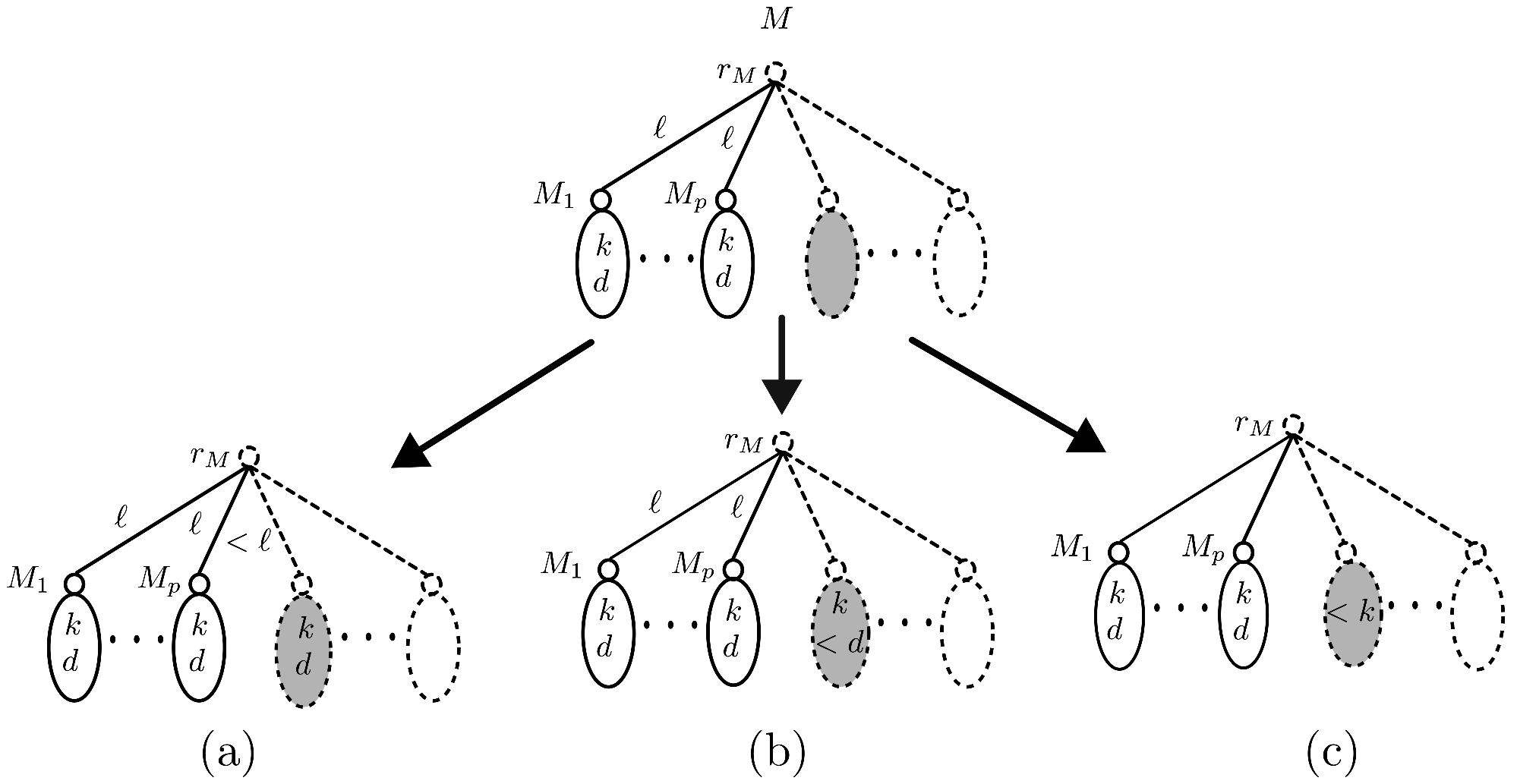}
  \caption{\comb{An illustration of different kinds of residual multigraphs depicted with dashed lines.}}
\label{fig:Multigraphs}
\end{figure}
\end{proof}

Let $n$, $k$, $\Delta$, $d$ and $\ell$ be five non-negative integers such that $ n-1 \geq k \geq 0$, and $\Delta \geq d+\ell \geq 0$. 
Let $m(n, \Delta, {k}_{\leq}, {d}_{\leq}, {\ell}_{\leq}), m(n, \Delta, {k}_=, {d}_{\leq}, {\ell}_{\leq}), m(n, \Delta, {k}_=, {d}_=, {\ell}_{\leq})$ and $m(n, \Delta, {k}_=, {d}_=, {\ell}_=)$ represent the number of elements in the families 
$\mathcal{M}(n, \Delta, {k}_{\leq}, {d}_{\leq}, {\ell}_{\leq})$, $\mathcal{M}(n, \Delta, {k}_=, {d}_{\leq}, {\ell}_{\leq})$, $\mathcal{M}(n, \Delta, {k}_=, {d}_=, {\ell}_{\leq})$ and $\mathcal{M}(n, \Delta, {k}_=, {d}_=, {\ell}_=)$, respectively. 
%
%
For any six integers $ n\geq3$, $k \geq 1$, $\Delta \geq d+\ell \geq 0$, 
and $x \geq 0$, 
let $$f(k, d; x) \triangleq \binom{m(k, d, {k-1}_\leq, {d}_\leq, {d}_\leq)+ x - 1}{x}$$ denote the number of combinations with repetition of $x$ descendant subgraphs from the family 
$\mathcal{M}(k, d, {k-1}_\leq, {d}_\leq, {d}_\leq)$. 
With all necessary information, we are ready to discuss a recursive relation for 
$m(n, \Delta, {k}_=, {d}_=, {\ell}_=)$ in~Lemma~\ref{lem_a} \comb{ and these relations are shown in Fig.}~\ref{fig:lemma2}.
 \begin{figure}[h!]
   \centering  
   \includegraphics [width=0.45\textwidth]{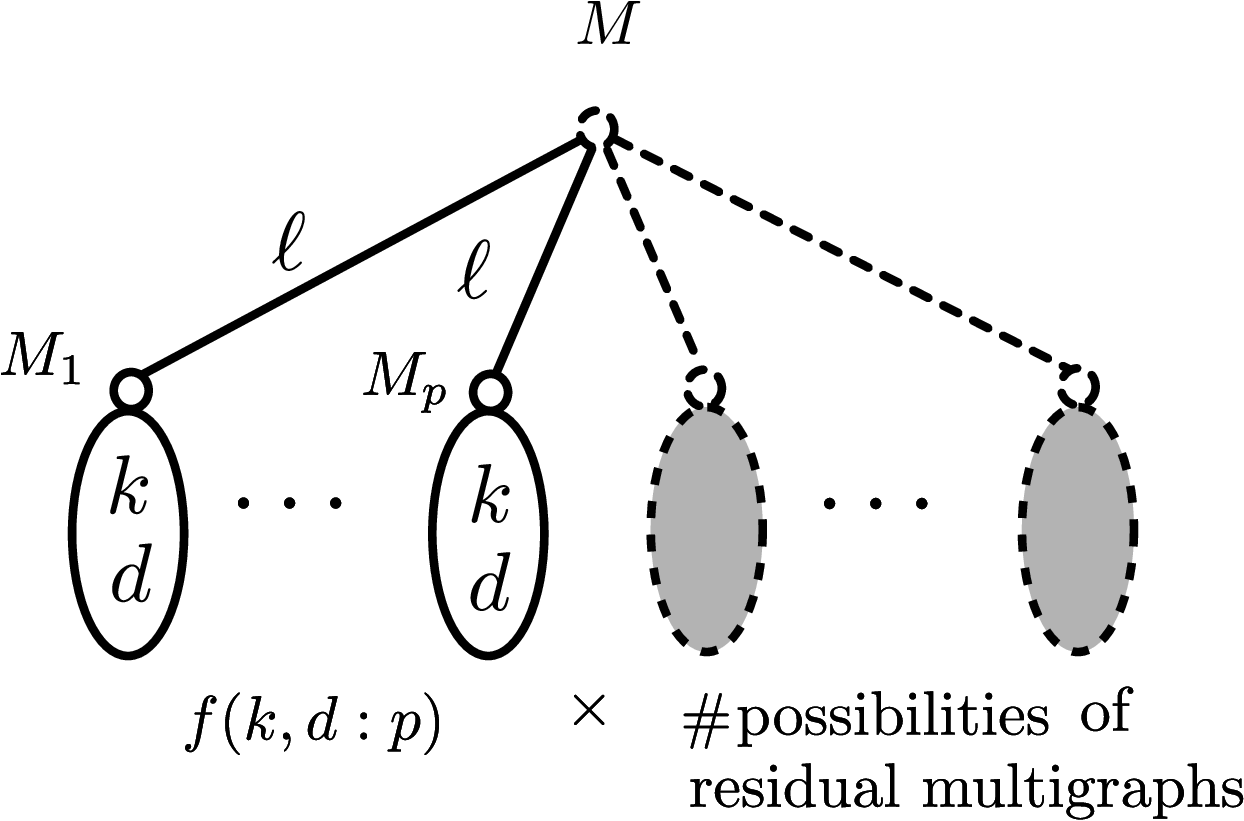}
  \caption{\comb{For fixed parameters $k$, $d$ and $p$, the number of possible descendent multigraphs $M_p$. }}
\label{fig:lemma2}
\end{figure}

\begin{lemma}
\label{lem_a}
For any six integers $ n \geq 3$, $k \geq 1$, $\Delta \geq d+\ell \geq 0$, and $p$, 
such that $1\leq p \leq \lfloor (n - 1)/k \rfloor$ with 
$p \leq \lfloor \Delta / (d~+~\ell) \rfloor$ when $d+\ell \geq 1 $ and $0\leq \ell \leq \Delta$, it holds that 
\begin{enumerate}[label=(\roman*), ref  = (\roman*), font=\upshape]
\item\label{lem_a_i}
$m(n, \Delta, {k}_=, {d}_=, {\ell}_=) = \sum_{p}f(k, d; p) (m(n-pk, \Delta, \min{\{n-pk-1, k-1\}}_{\leq},\\ {\Delta}_{\leq}, {\Delta}_{\leq}))$ if $d = 0$ and $\ell = 0$;

\item\label{lem_a_ii}
$m(n, \Delta, {k}_=, {d}_=, {\ell}_=) = \sum_{p}f(k, d; p)(m(n-pk, \Delta-pd,  
 {k}_=, \min{\{\Delta-pd, 
 d-1\}}\\_{\leq}, {\Delta-pd}_{\leq}) +   m(n-pk, \Delta-pd,  \min{\{n-pk-1, k-1\}}_{\leq}, {\Delta-pd}_{\leq}, {\Delta-pd}_{\leq}))$ if $d \geq 1$ and $\ell = 0$;

\item\label{lem_a_iii}
$m(n, \Delta, {k}_{=}, {d}_{=}, {\ell}_{=}) = \sum_{p}f(k, d; p) (m(n-pk, \Delta-p\ell, {k}_{=}, {d}_{=}, \\ \min{\{\Delta-p\ell, \ell-1\}}_{\leq}) + m(n-pk, \Delta-p\ell, \min{\{n-pk-1, k-1\}}_{\leq}, {\Delta-p\ell}_{\leq},\\ {\Delta-p\ell}_{\leq}))$ if $ d = 0$ and $\ell \geq 1$;

\item\label{lem_a_iv}
$m(n, \Delta, {k}_=, {d}_=, {\ell}_=) = \sum_{p}f(k, d; p) (m(n-pk, \Delta-p(d+\ell), {k}_=, {d}_=,\\  \min{\{\Delta-p(d+\ell), \ell-1\}}_{\leq})+m(n-kp, \Delta-p(d+\ell), {k}_=, \\ \min{\{\Delta-p(d+\ell), d-1\}}_{\leq}, {\Delta-p(d+\ell)}_{\leq})+m(n-kp, \Delta-p(d+\ell), \\ \min{\{n-pk-1, k-1\}}_{\leq}, {\Delta-p(d+\ell)}_{\leq},  {\Delta-p(d+\ell)}_{\leq}))$ if $d \geq 1$ and $\ell \geq 1.$ 

\end{enumerate}
\end{lemma}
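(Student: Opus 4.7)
The plan is to prove each of the four cases by a bijective decomposition supported by Lemma~\ref{L_1}. Given any $M \in \mathcal{M}(n, \Delta, k_{=}, d_{=}, \ell_{=})$, Lemma~\ref{L_1} splits the children of $r_M$ into two groups: (a) the $p$ children $v$ whose descendant subgraphs $M_v$ satisfy Condition~\ref{con:condit11} (that is, $M_v$ has exactly $k$ vertices, $d$ internal multiple edges, and is joined to $r_M$ by $\ell$ multiple edges), and (b) the remaining children, whose descendant subgraphs together with $r_M$ form the residual multigraph $R$. By Lemma~\ref{L_1}\ref{L_1_i}, $p$ ranges over $1 \le p \le \lfloor (n-1)/k \rfloor$, with the further restriction $p \le \lfloor \Delta/(d+\ell)\rfloor$ when $d+\ell \ge 1$. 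Because the canonical representation orders descendant subgraphs lexicographically by vertex count, then internal edge count, then edge multiplicity to the root, $M$ is determined up to isomorphism by the unordered multiset forming piece~(a) together with $R$.

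For fixed $p$, the number of admissible choices for piece~(a) equals $f(k, d; p)$. Every subgraph in piece~(a) is an arbitrary element of $\mathcal{M}(k, d, {k-1}_{\leq}, {d}_{\leq}, {d}_{\leq})$; the edge-to-root multiplicity is pinned to $\ell$ by Condition~\ref{con:condit11} and contributes no independent choice, and the $p$ such subgraphs form an unordered multiset by the canonical ordering, yielding $\binom{m(k, d, {k-1}_{\leq}, {d}_{\leq}, {d}_{\leq}) + p - 1}{p}$ configurations.

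For piece~(b), Lemma~\ref{L_1}\ref{L_1_ii} asserts that $R$ belongs to one of three explicit families, one per Condition~\ref{con:condit12}--\ref{con:condit14}. Their pairwise disjointness is established by inspecting the triple $({\rm Max}_{\rm v}(R), {\rm Max}_{\rm m}(R), {\rm Max}_{\rm l}(R))$: Condition~\ref{con:condit12} forces ${\rm Max}_{\rm v}(R) = k$ and ${\rm Max}_{\rm m}(R) = d$; Condition~\ref{con:condit13} forces ${\rm Max}_{\rm v}(R) = k$ and ${\rm Max}_{\rm m}(R) < d$; and Condition~\ref{con:condit14} forces ${\rm Max}_{\rm v}(R) < k$. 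The four cases of Lemma~\ref{lem_a} then arise by observing which of these families is vacuous: when $d = 0$ the Condition~\ref{con:condit13} family (which requires $d - 1 \ge 0$) is empty, and when $\ell = 0$ the Condition~\ref{con:condit12} family (which requires $\ell - 1 \ge 0$) is empty. The stated formulas (i)--(iv) follow by multiplying $f(k, d; p)$ by the sum of sizes of the non-vacuous residual families and summing over admissible $p$.

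The main obstacle is verifying that the decomposition $M \mapsto (\text{multiset}, R)$ is a bijection, so that the product of piece-(a) and piece-(b) counts, summed over $p$, yields $m(n, \Delta, k_{=}, d_{=}, \ell_{=})$ without double-counting or omission. This reduces to two checks: first, that $p$, the multiset, and $R$ are all isomorphism invariants of $M$, so distinct decompositions come from distinct multigraphs; second, that any pair (multiset, residual) satisfying the stated constraints can be re-glued into a multigraph in $\mathcal{M}(n, \Delta, k_{=}, d_{=}, \ell_{=})$ whose canonical representation agrees with the decomposition. The second direction is the subtler one: it requires that the canonical ordering be respected across pieces (a) and (b) simultaneously, and it is precisely this requirement that forces the case split into (i)--(iv) according to whether $d \ge 1$ and whether $\ell \ge 1$.
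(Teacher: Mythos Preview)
Your proposal is correct and follows essentially the same approach as the paper: decompose $M$ into the $p$ descendant subgraphs satisfying Condition~\ref{con:condit11} and the residual multigraph, count the former as a multiset via $f(k,d;p)$, classify the latter via Lemma~\ref{L_1}\ref{L_1_ii}, invoke disjointness of the residual families, and multiply-then-sum over $p$. Your write-up is in fact somewhat more careful than the paper's in making explicit that the map $M \mapsto (\text{multiset}, R)$ is a bijection and in articulating why the disjointness of the three residual families holds; the paper simply asserts the intersection is empty and moves on.
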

\begin{proof}
Let us consider a multigraph $M$ from the family $\mathcal{M}(n, \Delta, {k}_=, {d}_=, {\ell}_=)$. Since $m(k, d, {k-1}_{\leq}, {d}_\leq, {d}_\leq)$ represents the number of multigraphs in the family $\mathcal{M}(k, d, {k-1}_{\leq}, {d}_{\leq}, {d}_{\leq})$. By Lemma~\ref{L_1}(i), for a specific value of $p$ as $1\leq p \leq \lfloor (n~-~1)/k \rfloor$ with 
$p \leq \lfloor \Delta/(d+\ell) \rfloor $ when $d+\ell \geq 1$, there are exactly $p$ descendant subgraphs $M_v$ for $v\in N(r_M)$ and 
$M_v \in \mathcal{M}(k, d, {k-1}_{\leq}, {d}_{\leq}, {d}_{\leq})$. 
As $ \binom{m(k, d, {k-1}_\leq, {d}_\leq, {d}_\leq)+ p-1}{p}$ 
denote the number of combinations with repetition of $p$ trees from the family 
$\mathcal{M}(k, d, {k-1}_\leq, {d}_\leq, {d}_\leq)$ for $v\in N(r_{M})$.
Further, by Lemma~\ref{L_1}(ii) 
the residual subgraph $R$ of $M$ belongs to exactly one of the following families,\\

\noindent $\mathcal{M}(n-pk, \Delta, \min{\{n-pk-1, k-1\}}_{\leq}, {\Delta}_{\leq}, {\Delta}_{\leq}))$ if $d = 0$ and $\ell = 0$; or\\

\noindent$\mathcal{M}(n-pk, \Delta-pd, {k}_=, \min{\{\Delta-pd, d-1\}}_{\leq}, {\Delta-pd}_{\leq}) \ \cup \ \mathcal{M}(n-pk, \Delta-pd, $\\$ \min{\{n-pk-1, k-1\}}_{\leq}, {\Delta-pd}_{\leq}, {\Delta-pd}_{\leq}))$ if $d \geq 1$ and $\ell = 0$; or\\

\noindent$\mathcal{M}(n-pk, \Delta-p\ell, {k}_=, {d}_=, \min{\{\Delta-p\ell, \ell-1\}}_{\leq})\ \cup \ \mathcal{M}(n-pk, \Delta-p\ell, \\ \min{\{n-pk-1, k-1\}}_{\leq}, {\Delta-p\ell}_{\leq}, {\Delta-p\ell}_{\leq}))$ if $d = 0$ and $\ell \geq 1$;\\

\noindent$\mathcal{M}(n-pk, \Delta-p(d+\ell), {k}_=, {d}_=, \min{\{\Delta-p(d+\ell), \ell-1\}}_{\leq}) \cup \mathcal{M}(n-pk, \Delta-p(d+\ell), {k}_=, \min{\{\Delta-p(d+\ell), d-1\}}_{\leq}, {\Delta-p(d+\ell)}_{\leq}) \cup \mathcal{M}(n-pk, \Delta-p(d+\ell), \min{\{n-pk-1, k-1\}}_{\leq},  {\Delta-p(d+\ell)}_{\leq}, {\Delta-p(d+\ell)}_{\leq}))$ if $d \geq 1$ and $\ell \geq 1.$ \\

Note that in each case, intersection of any two families of the residual trees will always result in an empty set.
By the sum rule of counting, the total number of multigraphs in the families of residual multigraphs are
\begin{enumerate}[label=(\alph*), ref  = (\alph*), font=\upshape]
\item \label{a} $m(n-pk, \Delta, \min{\{n-pk-1, k-1\}}_{\leq}, {\Delta}_{\leq}, {\Delta}_{\leq})$ if $d = 0$ and $\ell = 0$; 
\item \label{b} $m(n-pk, \Delta-pd, {k}_=, \min{\{\Delta-pd, d-1\}}_{\leq}, {\Delta-pd}_{\leq})+m(n-pk, \Delta-pd, $\\$ \min{\{n-pk-1, k-1\}}_{\leq}, {\Delta-pd}_{\leq}, {\Delta-pd}_{\leq})$ if $d \geq 1$ and $\ell = 0$; 
\item \label{c} $m(n-pk, \Delta-p\ell, {k}_=, {d}_=, \min{\{\Delta-p\ell, \ell-1\}}_{\leq})+m(n-pk, \Delta-p\ell, $\\$ \min{\{n-pk-1, k-1\}}_{\leq}, {\Delta-p\ell}_{\leq}, {\Delta-p\ell}_{\leq})$ if $d = 0$ and $\ell \geq 1$;
\item \label{d} $m(n-pk, \Delta-p(d+\ell), {k}_=, {d}_=, \min{\{\Delta-p(d+\ell), \ell-1\}}_{\leq})+m(n-pk, \Delta-p(d+\ell), {k}_=, \min{\{\Delta-p(d+\ell), d-1\}}_{\leq}, {\Delta-p(d+\ell)}_{\leq})+m(n-pk, \Delta-p(d+\ell), \min{\{n-pk-1, k-1\}}_{\leq}, {\Delta-p(d+\ell)}_{\leq}, {\Delta-p(d+\ell)}_{\leq})$ if $d \geq 1$ and $\ell \geq 1.$ \\
\end{enumerate}
 This implies that for a fixed integer $p$ in the range given in the statement, and by the product rule of counting, the number of residual multigraphs $R$ in the family $\mathcal{M}(n, \Delta, {k}_=, {d}_=, {\ell}_=)$  
with exactly $p$ descendant subgraphs 
$R_v \in \mathcal{M}(k, d, {k-1}_{\leq }, {d}_{\leq}, {d}_{\leq})$, for $v \in M(r_R)$, are 
\smallskip
\begin{enumerate}[label=(\alph*), ref  = (\alph*), font=\upshape]
\item \label{a}  $f(k, d; p)
(m(n-pk, \Delta, \min{\{n-pk-1, k-1\}}_{\leq}, {\Delta}_{\leq}, {\Delta}_{\leq}))$ if $d = 0$ and $\ell = 0$; 
\item\label{b} $f(k, d; p) 
(m(n-pk, \Delta-pd, {k}_=, \min{\{\Delta-pd, d-1\}}_{\leq}, {\Delta-pd}_{\leq})+m(n-pk, \Delta-pd, \min{\{n-pk-1, k-1\}}_{\leq}, {\Delta-pd}_{\leq}, {\Delta-pd}_{\leq}))$ if $d \geq 1$ and $\ell = 0$; 
\item\label{c} $f(k, d; p) 
(m(n-pk, \Delta-p\ell, {k}_=, {d}_=, \min{\{\Delta-p\ell, \ell-1\}}_{\leq})+m(n-pk, \Delta-p\ell, \min{\{n-pk-1, k-1\}}_{\leq}, {\Delta-p\ell}_{\leq}, {\Delta-p\ell}_{\leq}))$ if $d = 0$ and $\ell \geq 1$;
\item\label{d} $f(k, d; p) 
(m(n-pk, \Delta-p(d+\ell), {k}_=, {d}_=, \min{\{\Delta-p(d+\ell), \ell-1\}}_{\leq})+m(n-pk, \Delta-p(d+\ell), {k}_=, \min{\{\Delta-p(d+\ell), d-1\}}_{\leq}, {\Delta-p(d+\ell)}_{\leq})+m(n-pk, \Delta-p(d+\ell), \min{\{n-pk-1, k-1\}}_{\leq}, {\Delta-p(d+\ell)}_{\leq}, {\Delta-p(d+\ell)}_{\leq}))$  if $d \geq 1$ and $\ell \geq 1.$ 

\end{enumerate}
\smallskip
Note that in the case of $n = 1$ and $\Delta \geq 1$,  $m(1, \Delta) = 0$, 
whereas for $n = 2$ and $\Delta \geq 1$, we have  $m(2, \Delta) = 1$. 
Similarly for $k = 1$ and $\Delta = 0$, we have $1 \leq p \leq n-1$, then by Lemma~\ref{l_1}(ii), it holds that $m(n-p, 0, 0_{\leq}, 0_{\leq}, 0_{\leq}) = 1$ if $p = n-1$ and $m(n-p, 0, 0_{\leq}, 0_{\leq}, 0_{\leq}) = 0$ if $1 \leq p \leq n-2$. This implies that any multigraph $M \in \mathcal{M}(n, 0, 1_=, 0_=, 0_=)$ has exactly $p = n - 1$ descendant subgraph $ M_v \in \mathcal{M}(1, 0, 0_{\leq }, 0_{\leq}, 0_{\leq})$, for $v\in N(r_{M})$. However, observe that for each integer $k \geq 2$ or $d+\ell \geq 1$, and $p$ satisfying  the conditions 
given in the statement, there exists at least one multigraph 
$M \in \mathcal{M}(n, \Delta, {k}_=, {d}_=, {\ell}_=)$ such that $M$ has exactly 
$p$ descendant subgraphs $M_v \in \mathcal{M}(k, d, {k-1}_{\leq}, {d}_{\leq}, {d}_{\leq})$, for 
$v \in N(r_M)$. 
This and cases~(a)-(d), imply the statements~(i)-(iv), respectively.
\end{proof}
\begin{theorem}\label{theom}
\comb{For any six integers $ n \geq 3$, $k \geq 1$, $\Delta \geq d+\ell \geq 0$, and $p$, 
such that $1\leq p \leq \lfloor (n - 1)/k \rfloor$ with 
$p \leq \lfloor \Delta / (d~+~\ell) \rfloor$ when $d+\ell \geq 1 $ and $0\leq \ell \leq \Delta$, it holds that }
\begin{enumerate}[label=(\roman*), ref  = (\roman*), font=\upshape]
\item\label{lem_a_v}
\comb{$m(n, \Delta, {k}_=, {d}_=, {\ell}_=) = \sum_{p}f(k, d; p-1) 
 (m(k, d, {k-1}_\leq, {d}_\leq, {d}_\leq) + p - 1)/ p) (m(n-pk, \Delta,  \min{\{n-pk-1, k-1\}}_{\leq}, {\Delta}_{\leq}, {\Delta}_{\leq}))$  if $d = 0$ and $\ell = 0$;}
 
 \item\label{lem_a_vi}
\comb{$m(n, \Delta, {k}_=, {d}_=, {\ell}_=) = \sum_{p}f(k, d; p-1) 
 (m(k, d, {k-1}_\leq, {d}_\leq, {d}_\leq) + p - 1)/ p) $\\$ 
(m(n-pk, \Delta - pd, {k}_=, \min{\{\Delta-pd, d-1\}}_{\leq}, {\Delta-pd}_{\leq})+m(n-pk, \Delta-pd, $\\$ \min{\{n-pk-1, k-1\}}_{\leq}, {\Delta-pd}_{\leq}, {\Delta-pd}_{\leq}))$ if $d\geq1$ and $\ell = 0$;}

\item\label{lem_a_vii}
\comb{$m(n, \Delta, {k}_=, {d}_=, {\ell}_=) = \sum_{p}f(k, d; p-1) 
 (m(k, d, {k-1}_\leq, {d}_\leq, {d}_\leq) + p - 1)/ p) 
(m \\(n-pk, \Delta-p\ell, {k}_=, {d}_=, \min{\{\Delta-p\ell, \ell-1\}}_{\leq})+m(n-pk, \Delta-p\ell,  \min{\{n-pk-1, k-1\}}_{\leq}, {\Delta-p\ell}_{\leq},  {\Delta-p\ell}_{\leq}))$ if $d = 0$ and $\ell \geq1$;}

\item\label{lem_a_viii}
\comb{$m(n, \Delta, {k}_=, {d}_=, {\ell}_=) = \sum_{p}f(k, d; p-1)(m(k, d, {k-1}_\leq, {d}_\leq, {d}_\leq)+p-1)/ p) $\\$ 
 (m(n-pk, \Delta-p(d+\ell), {k}_=, {d}_=, \min{\{\Delta-p(d+\ell), \ell-1\}}_{\leq})+m(n-pk, \Delta-p(d+\ell), {k}_=, \min{\{\Delta-p(d+\ell), d-1\}}_{\leq}, {\Delta-p(d+\ell)}_{\leq})+m(n-pk, \Delta-p(d+\ell), \min{\{n-pk-1, k-1\}}_{\leq}, {\Delta-p(d+\ell)}_{\leq}, {\Delta-p(d+\ell)}_{\leq}))$ if $d \geq 1$ and $\ell \geq 1.$ }
\end{enumerate}
\end{theorem}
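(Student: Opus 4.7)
The plan is to observe that Theorem~\ref{theom} is obtained from Lemma~\ref{lem_a} by a single algebraic rewrite of the quantity $f(k,d;p)$, chosen so that $f(k,d;p)$ can be updated from $f(k,d;p-1)$ in $O(1)$ time during the dynamic program. So the proof is essentially a substitution argument: show the identity, then plug it into each of the four cases of Lemma~\ref{lem_a}.

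First, I would set $N \triangleq m(k, d, {k-1}_\leq, {d}_\leq, {d}_\leq)$, so that by definition $f(k,d;p) = \binom{N+p-1}{p}$ for $p \geq 0$. The key identity to establish is
\begin{equation*}
f(k,d;p) \;=\; f(k,d;p-1)\cdot \frac{N + p - 1}{p} \qquad \text{for all } p \geq 1,
\end{equation*}
which follows directly from the standard relation $\binom{N+p-1}{p} = \binom{N+p-2}{p-1}\cdot\frac{N+p-1}{p}$. I would verify the boundary case $p=1$ explicitly, noting that $f(k,d;0) = \binom{N-1}{0} = 1$ and the right-hand side reduces to $N$, matching $f(k,d;1) = \binom{N}{1} = N$.

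Next, I would substitute this identity into each of the four summation formulas~\ref{lem_a_i}--\ref{lem_a_iv} of Lemma~\ref{lem_a}. In each summation, the coefficient $f(k,d;p)$ gets replaced by $f(k,d;p-1)\cdot (N+p-1)/p$, leaving the residual-multigraph sums untouched. Doing this for the four cases $(d=0,\ell=0)$, $(d\geq 1,\ell=0)$, $(d=0,\ell\geq 1)$, and $(d\geq 1,\ell\geq 1)$ yields exactly statements~\ref{lem_a_v}--\ref{lem_a_viii}, respectively. The index range for $p$ is inherited verbatim from Lemma~\ref{lem_a}.

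There is no substantive combinatorial obstacle here, since Lemma~\ref{lem_a} already encodes the structural decomposition. The only care needed is notational: one must track that the factor $(N+p-1)/p$ always produces an integer (which is automatic because $f(k,d;p)$ is an integer), and that the case $p=1$ in the rewritten formula uses $f(k,d;0) = 1$ together with the factor $N/1 = N = m(k,d,{k-1}_\leq,{d}_\leq,{d}_\leq)$, recovering the $p=1$ term of Lemma~\ref{lem_a}. Once these checks are done, the four statements follow immediately.
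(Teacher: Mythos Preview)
Your proposal is correct and follows essentially the same approach as the paper's own proof: invoke Lemma~\ref{lem_a} and then rewrite $f(k,d;p)$ via the binomial recurrence $f(k,d;p)=f(k,d;p-1)\cdot (N+p-1)/p$ with $N=m(k,d,{k-1}_\leq,{d}_\leq,{d}_\leq)$. The paper's proof is in fact slightly terser---it just states the factorial manipulation without the boundary check at $p=1$---so your version is, if anything, a bit more careful.
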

\begin{proof}
The proof of Theorem~\ref{theom} follows from Lemma~\ref{lem_a}.  Furthermore, it holds that 
 \begin{align*}
 f(k, d; p) = &\ \frac{(m(k, d, {k-1}_\leq, {d}_\leq, {d}_\leq) + p - 1)!}{(m(k, d, {k - 1}_\leq, {d}_\leq, {d}_\leq) - 1)!p!} \\
=& \frac{(m(k, d, {k - 1}_\leq, {d}_\leq, {d}_\leq) + p -2)!}{(m(k, d, {k - 1}_\leq, {d}_\leq, {d}_\leq) - 1)!(p - 1)!} \times
\vspace{2mm}
\frac{(m(k, d, {k - 1}_\leq, {d}_\leq, {d}_\leq) + p - 1)}{p}\\
  =& f(k, d; p - 1) \times \frac{(m(k, d, {k - 1}_\leq, {d}_\leq, {d}_\leq) + p - 1)}{ p}.
 \end{align*}
\end{proof}

In Lemma~\ref{first_recursion}, we discuss the recursive relations for $m(n, \Delta, {k}_{\leq}, {d}_{\leq}, {\ell}_{\leq})$, $m(n, \Delta, {k}_=, {d}_{\leq}, {\ell}_{\leq})$ and $m(n, \Delta, {k}_=, {d}_=, {\ell}_{\leq})$. 

\begin{lemma}\label{first_recursion}
For any five integers $ n-1 \geq k \geq 0$, and $\Delta \geq d+\ell \geq 0$, 
it holds that 
\begin{enumerate}[label=\textnormal{(\roman*)}, ref=(\roman*), font=\upshape]

\item \label{first_recursion_i} 
$m(n, \Delta, {k}_{\leq}, {d}_{\leq}, {\ell}_{\leq})= m(n, \Delta, 0_{=}, {d}_{\leq}, {\ell}_{\leq})$ if $k = 0$;

\item \label{first_recursion_ii} 
$m(n, \Delta, {k}_{\leq}, {d}_{\leq}, {\ell}_{\leq})= m(n, \Delta, {k-1}_{\leq}, {d}_{\leq}, {\ell}_{\leq}) + m(n, \Delta, {k}_{=}, {d}_{\leq}, {\ell}_{\leq})$  if $k \geq 1$;

\item \label{first_recursion_iii} 
$m(n, \Delta, {k}_=, {d}_{\leq}, {\ell}_{\leq})= m(n, \Delta, {k}_=, 0_=, {\ell}_{\leq}) $ if $d = 0$;

\item \label{first_recursion_iv} 
$m(n, \Delta, {k}_=, {d}_{\leq}, {\ell}_{\leq})= m(n, \Delta, {k}_=, {d-1}_{\leq}, {\ell}_{\leq}) +
m(n, \Delta, {k}_=, {d}_=, {\ell}_{\leq})$  if $d \geq 1$; 

\item \label{first_recursion_v}
$m(n, \Delta, {k}_=, {d}_=, {\ell}_{\leq})= m(n, \Delta, {k}_=, {d}_=, 0_=)$ if $\ell = 0$; and

 \item \label{first_recursion_vi}
$m(n, \Delta, {k}_=, {d}_=, {\ell}_{\leq}) = m(n, \Delta, {k}_=, {d}_=, {\ell-1}_{\leq})+
m(n, \Delta, {k}_=, {d}_=, {\ell}_=)$ if $\ell \geq 1.$
\end{enumerate}
\end{lemma}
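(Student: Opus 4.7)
The plan is to derive each of the six identities directly from the set-theoretic recursions already established in Eqs.~(\ref{equ1})--(\ref{equ3p}) by passing to cardinalities. All the structural content, namely that the subproblems $\mathcal{M}(n,\Delta,{k}_\leq,{d}_\leq,{\ell}_\leq)$ etc.\ can be decomposed into smaller subproblems, is already packaged in those equations, so the present lemma is essentially a bookkeeping step.

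For parts~\ref{first_recursion_i}, \ref{first_recursion_iii}, and \ref{first_recursion_v}, the argument is a one-line invocation: when $k=0$ (resp.\ $d=0$, $\ell=0$), Eq.~(\ref{equ1}) (resp.\ Eq.~(\ref{equ2}), Eq.~(\ref{equ3})) asserts the equality of two families as sets; taking sizes of both sides yields the claimed equality of counts. I would write this out uniformly as ``By Eq.~(X), the two families are identical, hence so are their cardinalities.''

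For parts~\ref{first_recursion_ii}, \ref{first_recursion_iv}, and \ref{first_recursion_vi}, the key observation is that the unions in Eqs.~(\ref{equ1p}), (\ref{equ2p}), (\ref{equ3p}) are disjoint: the paper already records this, for example with the statement
\[
\mathcal{M}(n,\Delta,{k-1}_\leq,{d}_\leq,{\ell}_\leq)\cap\mathcal{M}(n,\Delta,{k}_=,{d}_\leq,{\ell}_\leq)=\emptyset
\]
for $k\geq 1$, and similarly for the other two cases. Applying the sum rule of counting to each disjoint union then gives the additive recurrences for $m(n,\Delta,{k}_\leq,{d}_\leq,{\ell}_\leq)$, $m(n,\Delta,{k}_=,{d}_\leq,{\ell}_\leq)$, and $m(n,\Delta,{k}_=,{d}_=,{\ell}_\leq)$ respectively. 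Because the disjointness is immediate from the definitions of the ``$=$'' and ``$\leq$'' versions of the parameters (a multigraph cannot simultaneously satisfy ${\rm Max}_{\rm v}(M)\leq k-1$ and ${\rm Max}_{\rm v}(M)=k$, and analogously for $d$ and $\ell$), no additional verification is needed beyond citing the relevant equation.

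There is no real obstacle here; the lemma is a direct corollary of the set identities derived in the previous subsection together with the sum rule. The only thing to be careful about is to clearly pair each of (i)--(vi) with the appropriate equation among (\ref{equ1})--(\ref{equ3p}) so that the reader can verify the correspondence at a glance, and to note explicitly that the unions in (\ref{equ1p}), (\ref{equ2p}), (\ref{equ3p}) are disjoint before invoking additivity of cardinalities.
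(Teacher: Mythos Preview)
Your proposal is correct and matches the paper's own proof essentially line for line: the paper also derives cases~(i), (iii), (v) directly from Eqs.~(\ref{equ1}), (\ref{equ2}), (\ref{equ3}) respectively, and cases~(ii), (iv), (vi) from Eqs.~(\ref{equ1p}), (\ref{equ2p}), (\ref{equ3p}) together with the stated disjointness of the two pieces in each union. There is nothing to add.
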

\begin{proof}
By Eq.~(\ref{equ1}), the case~(i) follows. Similarly, by Eq.~(\ref{equ1p}), and the fact that $k \geq 1$, it holds that 
$\mathcal{M}(n, \Delta, {k-1}_{\leq}, {d}_{\leq}, {\ell}_{\leq})\cap 
\mathcal{M}(n, \Delta, {k}_{=}, {d}_{\leq}, {\ell}_{\leq}) = \emptyset$, case~(ii) follows.
Similarly by Eq.~(\ref{equ2}), case~(iii) follows. 
Similarly by Eq.~(\ref{equ2p}) and the fact that for $d \geq 1$ it holds that 
$\mathcal{M}(n, \Delta, {k}_=, {d-1}_{\leq}, {\ell}_{\leq})\cap
\mathcal{M}(n, \Delta, {k}_=, {d}_=, {\ell}_{\leq}) = \emptyset$, the case~(iv) follows.
By Eq.~(\ref{equ3}), case~(v) follows. Similarly by Eq.~(\ref{equ3p}) and the fact that. for 
$\ell \geq 1$ it holds that 
$\mathcal{M}(n, \Delta, {k}_=, {d}_=, {\ell-1}_{\leq})\cap 
\mathcal{M}(n, \Delta, {k}_{=}, {d}_=, {\ell}_=) = \emptyset$, case~(vi) follows.
\end{proof}
\subsection{Initial conditions }\label{Initial_Conditions}
In Lemma~\ref{l_1}, we discuss the initial conditions that are necessary to design a DP algorithm. 
\begin{lemma}
\label{l_1}
For any five integers $ n-1 \geq k \geq 0$, $\Delta \geq d+\ell \geq 0$, it holds that

\begin{enumerate}[label=\textnormal{(\roman*)}, ref=(\roman*), font=\upshape]
\item \label{l_1_i} $m(n, \Delta, 0_=, {d}_{=}, {\ell}_{=}) = 1$ if ``$n =1$ \text{ and } $\Delta= 0$"
and $m(n, \Delta, 0_{\leq}, {d}_{\leq}, {\ell}_{\leq})=0$ if ``$n = 1$ and $\Delta \geq 1$" or, ``$n \geq 2$''; 
\item \label{l_1_ii} $m(n, \Delta, 0_=, {d}_{\leq}, {\ell}_{\leq})=m(n,  \Delta, 0_{\leq}, {d}_{\leq},  {\ell}_{\leq})=1$  if $n=1$ and $\Delta=0$,
and $m(n, \Delta, \\ 0_=, {d}_{\leq}, {\ell}_{\leq})= m(n, \Delta, 0_{\leq}, {d}_{\leq}, {\ell}_{\leq}) =0$ if  $n \geq 2$;
\item \label{l_1_iii} $m(n, \Delta, 1_=, {d}_{=}, {\ell}_{\leq}) =0$   if $d \geq 1$, and  $m(n, \Delta, 1_=, 0_{=}, 0_{=}) = 0$ if $\Delta \geq 1$;
\item \label{l_1_iv} $m(n, \Delta, 1_=, 0_{=}, {\ell}_{=}) = m(n, \Delta, 1_=, 0_{=}, {\ell}_{\leq}) = m(n, \Delta, 1_=, j_{\leq}, {\ell}_{\leq}) = m(n, \Delta,  1_{\leq}, \\ j_{\leq}, {\ell}_{\leq}) = 1$ if $n = 2$ and $\ell = \Delta$; 
\item \label{l_1_v} $m(n, \Delta, {k}_=, {d}_=, {\ell}_=) = 0$ if  $d+\ell > \Delta$; and 
\item \label{l_1_vi} $m(n, \Delta, {k}_=, {d}_=, {\ell}_=) = m(n, \Delta, {k}_=, {d}_=, {\ell}_{\leq}) = m(n, \Delta, {k}_=, {d}_{\leq}, {\ell}_{\leq}) = 0$ if ``$k \geq n".$
\end{enumerate}
\end{lemma}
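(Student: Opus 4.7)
The plan is to dispatch each of the six parts by a direct appeal to the definitions of the subproblem families together with the definitions of ${\rm Max}_{\rm v}(M)$, ${\rm Max}_{\rm m}(M)$, and ${\rm Max}_{\rm l}(M)$. Each part is a degenerate or boundary case whose emptiness or uniqueness is forced by a simple combinatorial constraint, so no delicate induction or recursion is needed; in particular the recursive machinery of Eqs.~(\ref{equ1})--(\ref{equ3p}) and Lemma~\ref{L_1} is not invoked here.

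First I would handle \ref{l_1_i} and \ref{l_1_ii} together. The key observation is that ${\rm Max}_{\rm v}(M) \leq 0$ --- equivalently ${\rm Max}_{\rm v}(M) = 0$, since the defining maximum is taken against $0$ --- forces $N(r_M) = \emptyset$. Hence $M$ is a single isolated vertex, which is possible only when $n = 1$ and $\Delta = 0$. This pins down exactly when the family has size $1$ and when it is empty, and also shows that the $0_{=}$ and $0_{\leq}$ variants coincide in this regime.

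Next, for \ref{l_1_iii} I would use that any descendant subgraph $M_v$ with ${\rm v}(M_v) = 1$ is a single vertex with no internal edges, so ${\rm e}(M_v) = 0$; this makes the requirement $d \geq 1$ infeasible. For the second sub-claim of \ref{l_1_iii}, if in addition $\ell = 0$, then no multiple edges lie between the root and its children and each child subgraph contributes no multiple edges internally, forcing $\Delta = 0$. Part \ref{l_1_iv} is the unique-graph case $n = 2$: the only rooted multigraph on two vertices with $\Delta$ multiple edges is the root plus one child joined by $\Delta + 1$ parallel edges, and it realizes ${\rm Max}_{\rm v} = 1$, ${\rm Max}_{\rm m} = 0$, and ${\rm Max}_{\rm l} = \Delta$; hence the four stated families collapse to this singleton precisely when $\ell = \Delta$.

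Finally, parts \ref{l_1_v} and \ref{l_1_vi} are one-line observations. For \ref{l_1_v}, the total multiplicity in $M$ is at least $d + \ell$, being the sum of multiple edges inside a single descendant subgraph and of the multiple edges joining it to $r_M$; this forces $d + \ell \leq \Delta$, so the family is empty otherwise. For \ref{l_1_vi}, any descendant subgraph $M_v$ with $v \in N(r_M)$ satisfies ${\rm v}(M_v) \leq n - 1$, so ${\rm Max}_{\rm v}(M) \leq n - 1 < k$ whenever $k \geq n$, ruling out membership. The main obstacle is purely notational: one must carefully match each case to the correct $=$ or $\leq$ variant of the family and keep track of the standing assumption $\Delta \geq d + \ell$ imposed earlier; no genuinely creative step is required.
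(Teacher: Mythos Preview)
Your proposal is correct and follows essentially the same approach as the paper: both proofs dispatch each part by a direct appeal to the definitions of ${\rm Max}_{\rm v}$, ${\rm Max}_{\rm m}$, ${\rm Max}_{\rm l}$ and the corresponding families, with no substantive recursion needed. The only cosmetic difference is that for part~\ref{l_1_ii} the paper invokes Lemma~\ref{first_recursion}(i)--(ii) to equate the $0_{=}$ and $0_{\leq}$ variants, whereas you argue this directly from ${\rm Max}_{\rm v}(M)\geq 0$; your argument is in fact slightly more self-contained and your treatment of~\ref{l_1_iii}--\ref{l_1_v} is somewhat more explicit than the paper's.
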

\begin{proof}

\begin{enumerate}[label=\textnormal{(\roman*)}, ref=(\roman*), font=\upshape]
\item A multigraph $M$ with max$_{\rm v}(M) = 0$ exists if and only if  ${\rm v}(M) = 1$, max$_{\rm m}(M) = 0$ and max$_{\rm l}(M) = 0$. No such tree exist with single vertex and multiple edges.
\item 
It follows from Lemma~\ref{first_recursion}(i) and (ii) that $m(n, \Delta, 0_{\leq}, {d}_{\leq}, {\ell}_{\leq}) = m(n, \Delta, 0_=,  {d}_{\leq}, {\ell}_{\leq})$.
\item Since for $d \geq 1$ ,  size of ${\rm v}(M_v)$ of the descendant subgraph must be at least $2$ and $\ell$ must be in the range as $0 < \ell \leq \Delta.$
\item  There is only one descendant subgraph of size $1$, so the only possibility for multiple edges to be on the edges $r_M{r_{M_v}}$, i.e., ${\rm e}(r_M{r_{M_v}}) = \ell$  if $\ell = \Delta.$ In case, $\ell \neq \Delta$, then there is no residual multigraph which can accommodate the remaining $\Delta-\ell$ multiple edges. 
\item The maximum number of multiple edges in the descendant subgraphs plus the tree-like multigraphs, i.e., $d+\ell$ must be less than or equal to $\Delta.$ Otherwise $d+\ell$ will exceeds $\Delta$ as both $d$ and $\ell$ have the range from $0$ to $\Delta$. 
\item The maximum size of the descendant subgraph can be $n-1.$
\end{enumerate}
\end{proof}
As a consequence of Lemma~\ref{l_1}, we have  $m(1, \Delta) = 0$ and 
$m(2, \Delta) = 1$ for $\Delta \geq 1$.
\smallskip

\subsection{Proposed Algorithm} \label{Dynamic_Programming}
After deriving the  recursive relations in Theorem~\ref{theom} and Lemma~\ref{first_recursion} and initial conditions,  we are ready to design a DP algorithm to computer 
$m(n, \Delta)$  with the recursive structures of $m(n, \Delta, {k}_{\leq}, {d}_{\leq}, {\ell}_{\leq})$, 
$m(n, \Delta, {k}_=, {d}_{\leq}, {\ell}_{\leq})$, $m(n, \Delta, {k}_=, {d}_=, {\ell}_{\leq})$, and $m(n, \Delta, {k}_=, {d}_=, {\ell}_=)$ for
$0 \leq k\leq n - 1$ and $0\leq d, \ell \leq \Delta$.
\bigskip
\begin{lemma}
\label{com_1} For any five integers $ n-1 \geq k \geq 0$, and 
$\Delta \geq d+ \ell\geq 0$, $m(n, \Delta, k_{\leq}, d_{\leq}, \ell_{\leq})$
can be obtained in 
$\mathcal{O}(nk(n + \Delta (n +  d\ell\cdot \min\{n, \Delta\})))$ time 
and 
$\mathcal{O}( nk(\Delta(d\ell+1)+1))$ space.
\end{lemma}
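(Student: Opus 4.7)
The plan is to design a bottom-up dynamic-programming algorithm that maintains four tables storing the values $m(\cdot,\cdot,k'_=,d'_=,\ell'_=)$, $m(\cdot,\cdot,k'_=,d'_=,\ell'_\leq)$, $m(\cdot,\cdot,k'_=,d'_\leq,\ell'_\leq)$, and $m(\cdot,\cdot,k'_\leq,d'_\leq,\ell'_\leq)$ for all admissible tuples $(n',\Delta',k',d',\ell')$ with $n'\leq n$, $\Delta'\leq\Delta$, $k'\leq k$, $d'\leq d$, $\ell'\leq\ell$, and $d'+\ell'\leq\Delta'$. First I would initialize the base cases supplied by Lemma~\ref{l_1} (rows with $n'\leq 2$, $k'=0$, $k'\geq n'$, or $d'+\ell'>\Delta'$); each is computable in $O(1)$ and together they total $O(nk+n\Delta)$ work. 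Then I would fill the recursive entries with an outer loop in $n'$ ascending, and within each $n'$ compute the first table via Theorem~\ref{theom}, followed by the remaining three tables via parts~(v)--(vi), (iii)--(iv), and (i)--(ii) of Lemma~\ref{first_recursion}, iterating $\ell'$, then $d'$, then $k'$ in ascending order so that each lookup hits an already-filled cell.

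The crucial efficiency observation concerns the multinomial factor $f(k',d';p)$ appearing in Theorem~\ref{theom}: the identity $f(k',d';p)=f(k',d';p-1)\cdot(m(k',d',{k'-1}_\leq,d'_\leq,d'_\leq)+p-1)/p$ proved there lets us maintain $f$ incrementally in $O(1)$ per summand, provided $m(k',d',{k'-1}_\leq,d'_\leq,d'_\leq)$ is already stored in the fourth table; this is guaranteed by the outer loop on $n'$ since $k'<n'$. Consequently each entry of the first table costs $O(1)$ per value of $p$, and the number of admissible $p$ is at most $\lfloor(n'-1)/k'\rfloor\leq n-1$ when $d'=\ell'=0$ and at most $\min\{\lfloor(n'-1)/k'\rfloor,\lfloor\Delta'/(d'+\ell')\rfloor\}\leq\min\{n,\Delta\}$ otherwise. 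Each entry of the remaining three tables is computed directly from Lemma~\ref{first_recursion} in $O(1)$ by adding two previously stored values.

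For the time bound I would split the cost of filling the first table into (a) entries with $d'=\ell'=0$, of which there are $O(nk\Delta)$ each taking $O(n)$ work, contributing $O(n^2k\Delta)$, together with an $O(n^2k)$ residue from the $\Delta'=0$ slice; and (b) entries with $d'+\ell'\geq 1$, of which there are $O(nk\Delta d\ell)$ each taking $O(\min\{n,\Delta\})$ work, contributing $O(nk\Delta d\ell\min\{n,\Delta\})$. The three remaining tables contribute $O(nk\Delta(d\ell+1)+nk)$ in aggregate. Summing yields $O(nk(n+\Delta(n+d\ell\cdot\min\{n,\Delta\})))$, matching the stated bound. The space bound follows directly from adding the table sizes: the first two tables have $O(nk\Delta d\ell)$ cells each, the third collapses to $O(nk\Delta d)$ after exploiting the observation that the $\ell'$ argument coincides with $\Delta'$ in every call originating from Theorem~\ref{theom}, and the fourth has $O(nk\Delta)$ cells, giving $O(nk(\Delta(d\ell+1)+1))$ overall.

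The main obstacle is the bookkeeping that isolates the degenerate strata $d'+\ell'=0$ and $\Delta=0$ from the general case in both analyses, because the $p$-summation bound carries a factor $1/(d'+\ell')$ that blows up when $d'+\ell'=0$, and because several table dimensions collapse whenever $d'$ or $\ell'$ vanishes. A careful case-split, together with the uniform fallback bound $\min\{n,\Delta\}$ on the number of summands and the observation that the ``$k'_\leq,d'_\leq,\ell'_\leq$'' table is accessed by Theorem~\ref{theom} only at the diagonal $d'=\ell'=\Delta'$, keeps the counting clean and produces exactly the additive $n^2k$ and $n^2k\Delta$ contributions in the stated complexity.
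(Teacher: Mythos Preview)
Your proposal is correct and takes essentially the same approach as the paper: the paper's proof of Lemma~\ref{com_1} simply defers to Algorithm~\ref{algorithm} and Lemma~\ref{lem:complexity}, whose content you have reconstructed almost verbatim --- the same four DP tables, the same incremental update of $f(k',d';p)$ from Theorem~\ref{theom}, and the same case split ($d'=\ell'=0$ versus $d'+\ell'\geq 1$ for time; $\Delta=0$ versus $\Delta\geq 1$ for space). Your additional remark that the $k'_\leq$ and $(k'_=,d'_\leq)$ tables are accessed only along the diagonals $d'=\ell'=\Delta'$ and $\ell'=\Delta'$ is a nice sharpening the paper does not spell out, but it does not change the route.
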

\begin{proof}
The proof of Lemma~\ref{com_1} follows from Algorithm~\ref{algorithm} and Lemma~\ref{lem:complexity}.
 \end{proof}

\begin{corollary}
For any two integers $n\geq 1$ and $\Delta \geq 0,$ $m(n,\Delta ,n-1_{\leq},\Delta_{\leq}, \Delta_{\leq})$
can be obtained in 
$\mathcal{O}(n^2(n + \Delta (n + \Delta^2\cdot \min\{n, \Delta\})))$ time and 
$\mathcal{O}( n^2(\Delta^3+1))$ space.
\end{corollary}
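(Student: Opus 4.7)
The plan is to derive this corollary as an immediate consequence of Lemma~\ref{com_1}, combined with the identity $\mathcal{M}(n, \Delta) = \mathcal{M}(n, \Delta, {n-1}_\leq, \Delta_\leq, \Delta_\leq)$ established in item~(iv) of the properties preceding the recursive relations. That identity tells us that computing $m(n, \Delta)$ is the same as computing $m(n, \Delta, {n-1}_\leq, \Delta_\leq, \Delta_\leq)$, so the complexity statement of Lemma~\ref{com_1} can be applied directly with the parameter choices $k = n-1$ and $d = \ell = \Delta$.

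First, I would substitute $k = n-1$, $d = \Delta$, and $\ell = \Delta$ into the time bound of Lemma~\ref{com_1}, namely $\mathcal{O}(nk(n + \Delta(n + d\ell \cdot \min\{n, \Delta\})))$. This yields $\mathcal{O}(n(n-1)(n + \Delta(n + \Delta^2 \cdot \min\{n, \Delta\})))$, which collapses to the claimed bound $\mathcal{O}(n^2(n + \Delta(n + \Delta^2 \cdot \min\{n, \Delta\})))$ using $n(n-1) = \Theta(n^2)$.

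Next, I would carry out the analogous substitution in the space bound $\mathcal{O}(nk(\Delta(d\ell+1) + 1))$ to obtain $\mathcal{O}(n(n-1)(\Delta(\Delta^2 + 1) + 1))$. The inner expression satisfies $\Delta(\Delta^2+1) + 1 = \Delta^3 + \Delta + 1 = \mathcal{O}(\Delta^3 + 1)$, and invoking $n(n-1) = \Theta(n^2)$ once more yields the overall space complexity $\mathcal{O}(n^2(\Delta^3 + 1))$ as required.

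Because the corollary is essentially a plug-and-simplify argument, I do not expect a substantive obstacle; the only delicate point is ensuring the parameter choices are admissible. Namely, $k = n-1$ lies in the permitted range $0 \leq k \leq n-1$, and the bounds $d = \ell = \Delta$ are the natural upper limits consistent with property~(iv). Any boundary case with $n = 1$ or $\Delta = 0$ is absorbed by the additive $+1$ terms inside the $\mathcal{O}(\cdot)$ expressions, so no separate argument is needed.
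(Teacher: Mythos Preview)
Your derivation matches the paper's intended argument: the corollary is stated immediately after Lemma~\ref{com_1} with no separate proof, so it is meant to follow by exactly the substitution $k=n-1$, $d=\ell=\Delta$ that you carry out, and your simplifications of the resulting $\mathcal{O}$-expressions are correct.

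One small point you gloss over: Lemma~\ref{com_1} (and Lemma~\ref{lem:complexity}) are stated under the hypothesis $\Delta \geq d+\ell$, which the choice $d=\ell=\Delta$ violates whenever $\Delta\geq 1$. This is really a looseness in the paper's hypothesis rather than a flaw in your reasoning---the algorithm and its complexity analysis only need $d,\ell\leq\Delta$ individually, since the inner loops range over $0\leq u\leq\min\{j,d\}$ and $0\leq v\leq\min\{j,\ell\}$---but since you explicitly flag admissibility as ``the only delicate point,'' it is worth noting that the stated constraint is not literally satisfied.
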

\bigskip

\noindent For any five integers $n-1\geq k\geq 0$, and $\Delta\geq d+\ell\geq 0 $, we present Algorithm~\ref{algorithm} for solving the problem of calculating 
$m(n,\Delta,k_{\leq},d_{\leq}, \ell_{\leq})$. 
In this algorithm, for each integers  
$1\leq i\leq n,0\leq j\leq \Delta ,0\leq w\leq \min\{i, k\}, 0\leq u\leq \min\{j, d\}$, and $0\leq v\leq \min\{j, \ell\}$, the variables 
$m\left[ i,j,w_{\leq},u_{\leq},v_{\leq}\right]$, $m\left[ i,j,w_=,u_{\leq},v_{\leq}\right]$, $m\left[ i,j,w_{=},u_{=},v_{\leq}\right]$, and  
$m\left[ i,j,w_=,u_=,v_=\right]$ store the values of 
$m( i,j,w_{\leq},u_{\leq},v_{\leq}), m( i,j,w_=,u_{\leq},v_{\leq}), m( i,j,w_=,u_{=},v_{\leq})$, and $m(i,j,w_=,u_=,v_=) $,~respectively. 

\bigskip
\begin{lemma}
\label{lem:complexity}
For any five integers $n-1 \geq k \geq 0$, and 
$\Delta \geq d+\ell \geq 0$, Algorithm~\ref{algorithm} outputs 
$m(n, \Delta, k_\leq, d_\leq, \ell_{\leq})$ in 
$\mathcal{O}(nk(n + \Delta (n +  d\ell\cdot \min\{n, \Delta\})))$ time 
and 
$\mathcal{O}( nk(\Delta(d\ell+1)+1))$ space.
\end{lemma}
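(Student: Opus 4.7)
The plan is to establish correctness first and then bound space and time separately. For correctness, I would proceed by a simultaneous induction on $(i, j, w, u, v)$ in the order the algorithm fills the tables. The base cases are handled by the explicit initializations dictated by Lemma~\ref{l_1}(i)--(vi). For the inductive step, the recurrence of Theorem~\ref{theom} expresses $m[i, j, w_=, u_=, v_=]$ as a sum over $p \geq 1$ whose terms involve $m$-values with strictly smaller first coordinate $i - pw < i$; since the outer loop processes $i$ in increasing order, these values are already stored. Similarly, Lemma~\ref{first_recursion}(i)--(vi) expresses $m[i, j, w_\leq, u_\leq, v_\leq]$, $m[i, j, w_=, u_\leq, v_\leq]$, and $m[i, j, w_=, u_=, v_\leq]$ in terms of entries with strictly smaller $w$, $u$, or $v$, which are filled earlier in the corresponding inner loop. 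Lemma~\ref{L_1}(i) also supplies the valid range of $p$, which the algorithm enumerates exactly.

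For space, I would count the number of entries stored in each of the four tables. The finest-grained table $m[i, j, w_=, u_=, v_=]$ is indexed by $(i, j, w, u, v) \in [1, n] \times [0, \Delta] \times [0, k] \times [0, d] \times [0, \ell]$, contributing $\mathcal{O}(nk\Delta d\ell)$ entries. The three coarser tables collapse one or more of $u, v, w$ to a single representative for the "$\leq$" case, so in aggregate they contribute $\mathcal{O}(nk\Delta)$ entries, plus $\mathcal{O}(nk)$ entries for the $\Delta$-independent base cases. Summing yields the stated bound $\mathcal{O}(nk(\Delta(d\ell+1)+1))$.

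For time, the dominant cost is filling $m[i, j, w_=, u_=, v_=]$: each of its $\mathcal{O}(nk\Delta d\ell)$ entries is computed by the recurrence of Theorem~\ref{theom}, whose summation over $p$ has at most $\min\{\lfloor(i-1)/w\rfloor,\ \lfloor j/(u+v)\rfloor\} \leq \min\{n, \Delta\}$ terms, and each term costs $\mathcal{O}(1)$ work because $f(w, u; p)$ is updated incrementally from $f(w, u; p-1)$ by the identity derived in the proof of Theorem~\ref{theom}. This gives $\mathcal{O}(nk\Delta d\ell \cdot \min\{n, \Delta\})$. Filling the three coarser tables takes $\mathcal{O}(1)$ per entry by Lemma~\ref{first_recursion}, and the sweeps needed to accumulate into them across all $(i, j, w)$ contribute the additive $\mathcal{O}(n^2 k + n^2 k \Delta) = \mathcal{O}(nk(n + n\Delta))$ term. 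Combining the three contributions yields $\mathcal{O}(nk(n + \Delta(n + d\ell \cdot \min\{n, \Delta\})))$.

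The main obstacle will be justifying the $\min\{n, \Delta\}$ bound on the inner summation cleanly: it arises only after taking the smaller of the two independent upper bounds on $p$ in Lemma~\ref{L_1}(i), and it must be combined with the fact that the incremental update of $f(w, u; p)$ keeps each term $\mathcal{O}(1)$ rather than $\mathcal{O}(p)$. A careful case split by whether $d + \ell = 0$, $d + \ell \geq 1$ with $w = 1$, or $w \geq 2$ is needed so that the degenerate cases (where $p$ can reach $n - 1$) are absorbed into the $\mathcal{O}(nk \cdot n)$ term and do not inflate the final bound.
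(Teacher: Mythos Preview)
Your approach is correct and mirrors the paper's proof: correctness via the recursions of Theorem~\ref{theom} and Lemmas~\ref{L_1}, \ref{first_recursion}, \ref{l_1}, then a direct loop-count for time and table-size count for space. Two small corrections to your intermediate reasoning (neither affects the final bounds): the ``coarser'' tables $m[i,j,w_\leq,u_\leq,v_\leq]$, $m[i,j,w_=,u_\leq,v_\leq]$, $m[i,j,w_=,u_=,v_\leq]$ do \emph{not} collapse any index---each is still a function of all five parameters $(i,j,w,u,v)$ and hence has $\mathcal{O}(nk\Delta d\ell)$ entries, the same as the finest table; and the $\mathcal{O}(n^2k\Delta)$ time term arises from the $u=v=0$ branch of the inner $p$-loop (where $p$ ranges up to $\lfloor(i-1)/w\rfloor\le n$ with no $\Delta$-dependent cap), not from sweeps over the coarser tables---precisely the degeneracy you anticipate in your closing paragraph.
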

\begin{proof}
Correctness: For each integer $1 \leq i \leq n, 0 \leq j \leq \Delta, 
0 \leq w \leq\min\{i, k\}, 
0 \leq u \leq \min\{j ,d\}$, and $0 \leq v \leq \min\{j ,\ell\}$  
all the substitutions and if-conditions in Algorithm~\ref{algorithm} follow from Theorem~\ref{theom} and Lemmas~\ref{L_1}, \ref{first_recursion}, and  \ref{l_1}.  
Furthermore, the values $m[i,j,w_\leq, u_\leq, v_\leq]$, $m[i,j,w_=, u_{\leq}, v_\leq]$,  
 $m[i,j,w_=, u_=, v_\leq]$, and $m[i,j,w_=, u_=, v_=]$ are computed by the recursive relations given in Theorem~\ref{theom} and Lemma~\ref{first_recursion}. 
This implies that Algorithm~\ref{algorithm} correctly computes the required value $m[n,\Delta, k_\leq, d_\leq, \ell_{\leq}]$.

Complexity analysis: 
There are five nested loops over the variables $i,j,w,u$ and $v$ at 
line~8, which take $\mathcal{O}(n(\Delta(kd\ell+k+1)+k +1))$ time. 
Following there are six nested loops: over variables $i,j,w,u,v$, and $p$ at lines~9, 10, 11, 12, 13 and 24, respectively.  
The loop at line~9 is of size $\mathcal{O}(n)$, 
while the loop at line~10 is of size $\mathcal{O}(\Delta)$.  
Similarly, the loops at lines~11, 12 and 13 are of size
$\mathcal{O}(k)$, $\mathcal{O}(d)$, and $\mathcal{O}(\ell)$ respectively. 
The six nested loop at line~24 is of size $\mathcal{O}(n)$
(resp., $\mathcal{O}(\min\{n, \Delta\})$ if $u = v = 0$ (resp., otherwise).  
Thus from line~9--59, 
Algorithm~\ref{algorithm} takes $\mathcal{O}({n}^2k)$ (resp., $\mathcal{O}(nk\Delta(n + d\ell\cdot \min\{n, \Delta\}))$) time if 
$\Delta = 0$ (resp., otherwise). 
Therefore, Algorithm~\ref{algorithm} takes 
$\mathcal{O}(nk(n + \Delta (n +  d\ell\cdot \min\{n, \Delta\})))$ time. 

The algorithm stores four five-dimensional arrays. 
When $\Delta = 0$, for each integer  $1 \leq i \leq n$, and 
$1 \leq w \leq  \min\{i, k\}$ we store $m[i, 0, k_{\leq}, 0_\leq, 0_\leq], 
m[i, 0, k_=, 0_\leq, 0_\leq], m[i, 0, k_=, 0_=, 0_\leq]$ and $m[i, 0, k_=, 0_=, 0_=]$, taking $\mathcal{O}(nk)$ 
space. 
When $\Delta \geq 1$, then for each integer $1 \leq i \leq n$, 
$0 \leq j \leq \Delta$, 
$1 \leq w \leq \min\{i, k\}$,
$0 \leq u \leq \min\{j, d\}$, and $0 \leq v \leq \min\{j, \ell\}$
we store $m[i, j, w_{\leq}, u_\leq, v_\leq], 
m[i, j, w_=, u_\leq, v_\leq]$, $m[i, j, w_=, u_=, v_\leq]$, and $m[i, j, w_=, u_=, v_=]$ taking 
$\mathcal{O}(nk\Delta(d\ell+ 1))$ space. 
Hence, Algorithm~\ref{algorithm} takes $\mathcal{O}( nk(\Delta(d\ell+1)+1))$ space.
\end{proof}

\begin{algorithm} [h!]
\caption{An algorithm for counting $%
m(n, \Delta, {k}_{\leq}, {d}_{\leq}, {\ell}_{\leq})$ based on DP}
\label{algorithm}
\begin{algorithmic}[1]
\Require Integers $n-1 \geq k \geq 0$ and $\Delta\geq d, \ell \geq 0 $.
\Ensure $m(n, \Delta, {k}_{\leq}, {d}_{\leq}, {\ell}_{\leq})$.
	\parState{$m[1, j, 0_{=},0_=, 0_=] := m[1, j, 0_=, 0_=, 0_{\leq}] := m[1, j, 0_=, 0_{\leq}, 0_{\leq}] :=\\
			    m[1, j,0_{\leq}, 0_{\leq}, 0_{\leq}] := 1$;}
		\parState{$m[2, j, 1_=, 0_=, v_=] := m[2, j, 1_=, 0_=, j_{\leq}] := 
			     m[2, j, 1_=, j_{\leq}, j_{\leq}] :=\\
			     m[2, j,  1_{\leq}, j_{\leq}, j_{\leq}] := 1$ if $v = j$;}
                \parState{$m[i, j, w_=, u_{\leq}, v_{\leq}] := 0$ if $w$ $\geq$ $i$;}
                \parState{$m[1, j, 0_{\leq}, u_{\leq}, v_{\leq}] := 0$ if $j$ $\geq$ $1$;}
                \parState{$m[i, j, 0_=, u_{\leq}, v_{\leq}] := m[i, j, 0_{\leq}, u_{\leq}, v_{\leq}] :=0$ if $i$ $\geq$ $2$;}
			  \parState{$m[i, j, 1_=, u_=, v_{\leq}] := 0$ if $u$ $\geq$ $1$;}
		      \parState{$m[i, j, w_=, u_=, v_=] := 0;$ if $u+v$ $>$ $j$;}
            \parState{for each $ 2 \leq i \leq n,  0\leq j \leq \Delta, 1 \leq w \leq \min\{i, k\}, 0\leq u \leq 
                \min\{j, d\},  0\leq v \leq \min\{j, \ell\}$.}  
   \For{$i :=  3, \ldots, n$}
	\For{$j := 0, \ldots, \Delta$}	
		\For{$w := 1,  \ldots, \min\{i, k\}$} 
			\For{$u := 0,  \ldots, \min\{j,d\}$}  
                \For{$v := 0,  \ldots, \min\{j,\ell\}$}
                  \If{$ u + v \leq j$}  
				    \If{$j = u = v = 0$ and $w = 1$} 
					  \State{$m[i, 0, 1_=, 0_=, 0_=]:= m[i, 0, 1_=, 0_=, 0_{\leq}]:= m[i, 0, 1_=, 0_{\leq},$ 
					 \Statex \hspace{3.7cm} $0_{\leq}]:= m[i, 0, 1_{\leq},0_{\leq}, 0_{\leq}]:=1$}
				         \Else{ $w \geq 2$ or $j \geq 1$ or $u+v \geq 1$} 
					      \State{$h := 1$; $m[i,j,w_=,u_=,v_=] :=0;$ }  
					       \If{$u = v = 0$}   
						      \State{$z := \lfloor (i - 1)/w \rfloor$} 
					           \Else{ $u \geq 1 $ or $v \geq 1$}
					           \State{$z := \min \{\lfloor (i-1)/w \rfloor, \lfloor j/u+v \rfloor \}$} 
                                       \EndIf
					         \For{$p := 1, 2, \ldots, z$}  	
						      \State{
						       $h := h  \cdot 
						          (m[w, u, w-1_\leq, u_\leq, u_\leq] + p -1)/p;$}
						           \If{$u = v = 0$} 
							           \State{$m[i,j,w_=,u_=,v_=] := m[i,j,w_=,u_=,v_=] + h \cdot m[i-$  
							           \Statex \hspace{4.6cm} $pw,j,\min\{i - pw - 1, w - 1\}_{\leq},j_{\leq},j_{\leq}]$}
                                        \ElsIf{$u\geq 1$ and $v = 0$ } 
				                        \State{$m[i,j,w_=,u_=,v_=] := m[i,j,w_=,u_=,v_=] + h \cdot m[i-$  
						              \Statex \hspace{4.6cm} $pw,j-pu,w_=,\min\{j - pu, u - 1\}_{\leq}, j-pu_{\leq}]+  m[i$
						             \Statex \hspace{4.6cm} $-pw,j-pu,\min\{i - pw - 1, w-1\}_{\leq},j-pu_{\leq}, j-pu_{\leq}]$}
                                         \ElsIf{ $u = 0$ and $v\geq 1$}
                                           \State{$m[i,j,w_=,u_=,v_=] := m[i,j,w_=,u_=,v_=] +  h \cdot m[i-$
                                            \Statex \hspace{4.8cm}$pw,j-pv,w_=, u_=, \min\{j - pv, v - 1\}_{\leq}]+ m[i-pw,$
                                           \Statex \hspace{4.7cm} $j-pv,\min\{i - pw - 1, w - 1\}_{\leq},j-pv_{\leq}, j-pv_{\leq}]$}
                                          \ElsIf{ $u\geq 1$ and $v\geq 1$} 
  \algstore{bkbreak}
\end{algorithmic}
\end{algorithm}
\addtocounter{algorithm}{-1}
\begin{algorithm}[h]
\caption{Continued} 
\begin{algorithmic}[1]
\algrestore{bkbreak}	 
  \State{$m[i,j,w_=,u_=,v_=] := m[i,j,w_=,u_=,v_=] +  h\cdot m[i-$
  \Statex \hspace{4.7cm}$pw,j-p(u+v),w_=, u_=, \min\{j - p(u+v), v - 1\}_{\leq}]+  $
  \Statex \hspace{4.7cm}$m[i-pw,j-p(u+v), w_=, \min\{j - p(u+v), u -1\}_{\leq},$
   \Statex \hspace{4.7cm}$  j-p(u+v)_{\leq}]+ m[i-pw,j-p(u+v), \min\{i - pw - 1, $  
   \Statex \hspace{4.7cm}$ w - 1\}_{\leq},j-p(u+v)_{\leq}, j-p(u+v)_{\leq}]$}   
                                \EndIf 
                                \EndFor 
\If{ $u = v = 0$} 
						       \State{$m[i,j,w_=,0_{\leq},0_{\leq}] : = m[i,j,w_=,0_=,0_=]$}  
					          \ElsIf{ $u = 0$ and $v \geq 1$} 
						       \State{$m[i,j,w_=,0_{\leq},v_{\leq}] := m[i,j,w_=,0_{\leq},{v-1}_{\leq}]+ m[i,j,$
							   \Statex \hspace{4.1cm} $w_=, 0_=,v_=]$};
							   \State{$m[i,j,w_=,u_=,v_{\leq}] := m[i,j,w_=,u_=,{v-1}_{\leq}] + m[i,j,$
							   \Statex \hspace{4.1cm} $w_=,u_=,v_=]$}; 
                                \ElsIf{ $u \geq 1$ and $v = 0$} 
						      \State{$m[i,j,w_=,u_{\leq},0_{\leq}] := m[i,j,w_=,{u-1}_{\leq},0_{\leq}] + m[i,j,$
						      \Statex \hspace{4.2cm}$w_=,u_=,0_=]$}; 
							   \State{$m[i, j, w_=, u_=, v_{\leq}] := m[i, j, w_=, u_=, 0_=]$} 
                                \Else{ $u \geq 1$ and $v \geq 1$} 
                                 \State{$m[i,j,w_=,u_=,v_{\leq}] := m[i,j,w_=,u_=,{v-1}_{\leq}] + m[i,j,$
                                 \Statex \hspace{4.2cm}$w_=,u_=,v_=]$};
                                 \State{$m[i,j,w_=,u_{\leq},v_{\leq}] := m[i,j,w_=,{u-1}_{\leq},v_{\leq}] + m[i,j,$
                                 \Statex \hspace{4.2cm}$w_=,u_=,v_{\leq}]$} 
								\EndIf 
					           \State{$m[i,j,w_{\leq},u_{\leq},v_{\leq}] := m[i,j,{w-1}_{\leq},u_{\leq},v_{\leq}] + m[i,j,w_=,$
						       \Statex \hspace{3.7cm}$u_{\leq},v_{\leq}]$}
                     \EndIf 
                   \Else 
                    \State{$m[i,j,w_=,u_=,v_{\leq}] := m[i,j,w_=,u_=,{v-1}_{\leq}] + m[i,j,w_=,u_=,$
                    \Statex \hspace{3.2cm}$v_=]$} ;
                                 \State{$m[i,j,w_=,u_{\leq},v_{\leq}] := m[i,j,w_=,{u-1}_{\leq},v_{\leq}] + m[i,j,w_=,u_=,$
                                \Statex \hspace{3.1cm} $v_{\leq}]$};
                                 \State{$m[i,j,w_{\leq},u_{\leq},v_{\leq}] := m[i,j,{w-1}_{\leq},u_{\leq},v_{\leq}] + m[i,j,w_=,u_{\leq},$
                                \Statex \hspace{3.1cm} $v_{\leq}]$}
				    \EndIf 
				\EndFor
			\EndFor 
		\EndFor 
	\EndFor 
\EndFor 	
\State{{\bf Return} $m[n,\Delta,{k}_{\leq},d_{\leq},\ell_{\leq}]$  as $m(n, \Delta,{k}_{\leq},d_{\leq},\ell_{\leq})$.}
\end{algorithmic}
\end{algorithm}
\restoregeometry

\begin{theorem}
For any two integers $n \geq 1$ and $\Delta \geq 0$, the number of non-isomorphic tree-like multigraphs with $n$ vertices and 
$\Delta$ multiple edges can be obtained in 
$\mathcal{O}(n^2(n + \Delta (n + \Delta^2\cdot \min\{n, \Delta\})))$ time and 
$\mathcal{O}( n^2(\Delta^3+1))$ space.
\end{theorem}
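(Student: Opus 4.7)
The plan is to reduce the counting of unrooted non-isomorphic tree-like multigraphs to the rooted counting framework already built in Lemma~\ref{com_1}, by invoking Jordan's centroid decomposition, and then to show that the extra bookkeeping is absorbed by the bound already guaranteed by the Corollary.

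First, I would partition the unrooted tree-like multigraphs on $n$ vertices with $\Delta$ multiple edges according to the type of centroid: each such multigraph has either a unique unicentroid vertex or a unique bicentroid edge (the latter only when $n$ is even), and by Jordan's theorem these two cases are mutually exclusive. Let $U(n,\Delta)$ and $B(n,\Delta)$ denote the corresponding counts, so that the target quantity equals $U(n,\Delta)+B(n,\Delta)$.

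For the unicentroid case, rooting at the unicentroid is a bijection from unicentroid trees onto $\mathcal{M}(n, \Delta, \lfloor(n-1)/2\rfloor_{\leq}, \Delta_{\leq}, \Delta_{\leq})$: the defining property of the unicentroid is precisely that every descendant subgraph at a child of the root has at most $\lfloor(n-1)/2\rfloor$ vertices, and the canonical representation of Section~\ref{counting} prevents isomorphic rooted multigraphs from being double-counted. Hence $U(n,\Delta) = m(n, \Delta, \lfloor(n-1)/2\rfloor_{\leq}, \Delta_{\leq}, \Delta_{\leq})$, which is a single entry of the DP table produced by Algorithm~\ref{algorithm} when invoked with $k=n-1,\ d=\ell=\Delta$, so it falls under the Corollary's bound.

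For the bicentroid case, deleting the bicentroid edge leaves two rooted subtrees on $n/2$ vertices rooted at the endpoints of the edge. Writing $\mu$ for the multiplicity of the bicentroid edge and $d_1,d_2$ for the multi-edge counts inside the two halves, we have $\mu+d_1+d_2=\Delta$, and both subtrees belong to $\mathcal{M}(n/2,d_i)=\mathcal{M}(n/2,d_i,(n/2-1)_{\leq},(d_i)_{\leq},(d_i)_{\leq})$. Because the bicentroid edge is unordered, when the two halves are isomorphic as rooted multigraphs the pair must be counted only once; writing $s(d) \triangleq m(n/2,d,(n/2-1)_{\leq},d_{\leq},d_{\leq})$, the count $B(n,\Delta)$ is the sum over $\mu\in\{0,\dots,\Delta\}$ and pairs $(d_1,d_2)$ with $d_1\leq d_2$ and $d_1+d_2=\Delta-\mu$ of $s(d_1)\,s(d_2)$ when $d_1<d_2$ and of $\binom{s(d_1)+1}{2}$ when $d_1=d_2$. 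Since every value $s(d)$ is already stored as an entry of the DP table computed for the Corollary, assembling $B(n,\Delta)$ requires only $\mathcal{O}(\Delta^2)$ post-processing.

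Summing $U(n,\Delta)+B(n,\Delta)$ gives the desired unrooted count. The time and space are dominated by the Corollary, namely $\mathcal{O}(n^2(n + \Delta(n + \Delta^2\cdot\min\{n,\Delta\})))$ and $\mathcal{O}(n^2(\Delta^3+1))$, into which the additional $\mathcal{O}(\Delta^2)$ work is absorbed. The main obstacle is justifying that the two centroid cases partition the unrooted multigraphs cleanly — which follows from the uniqueness part of Jordan's theorem quoted in the preliminaries — together with correctly accounting for the unordered symmetry of the bicentroid pair, which is precisely why the combination-with-repetition term $\binom{s(d)+1}{2}$ appears when $d_1=d_2$; no analogous symmetry arises in the unicentroid case because the root is a single distinguished vertex.
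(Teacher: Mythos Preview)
Your proposal is correct and takes essentially the same approach as the paper: Jordan's centroid decomposition, the rooted DP table for the unicentroid case via $m(n,\Delta,\lfloor(n-1)/2\rfloor_{\leq},\Delta_{\leq},\Delta_{\leq})$, and the two-half product (with an unordered-pair correction) for the bicentroid case, all absorbed into the Corollary's bounds. Your bicentroid bookkeeping is in fact tidier than the paper's --- you apply the $\binom{s(d)+1}{2}$ correction uniformly whenever $d_1=d_2$, rather than splitting on the parity of $\Delta$ --- but the strategy and complexity accounting are the same.
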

\begin{proof}
Each multigraph becomes a tree after removing multiple edges. Consequently, we can classify the multigraphs of $\mathcal{M}(n,\Delta)$ based on centroids of the  trees  by applying Jordan result~\cite{jordan1869assemblages}. This classification divides multigraphs into two categories: those with a bicentroid and those with a unicentroid. The bicentroid case holds when the number of vertices is  even, resulting in a multigraph with exactly two descendent subgraph of size $n/2$. In the case of unicentroid, the size of  descendent subgraph cannot exceed $\lfloor {n}/2 \rfloor$. \comb{An illustration of unicentroid is given in Fig.~\ref{fig:centroid_bi-uni}(a)}.
\begin{figure}[h!]
    \centering
    \includegraphics[width=0.9\textwidth]{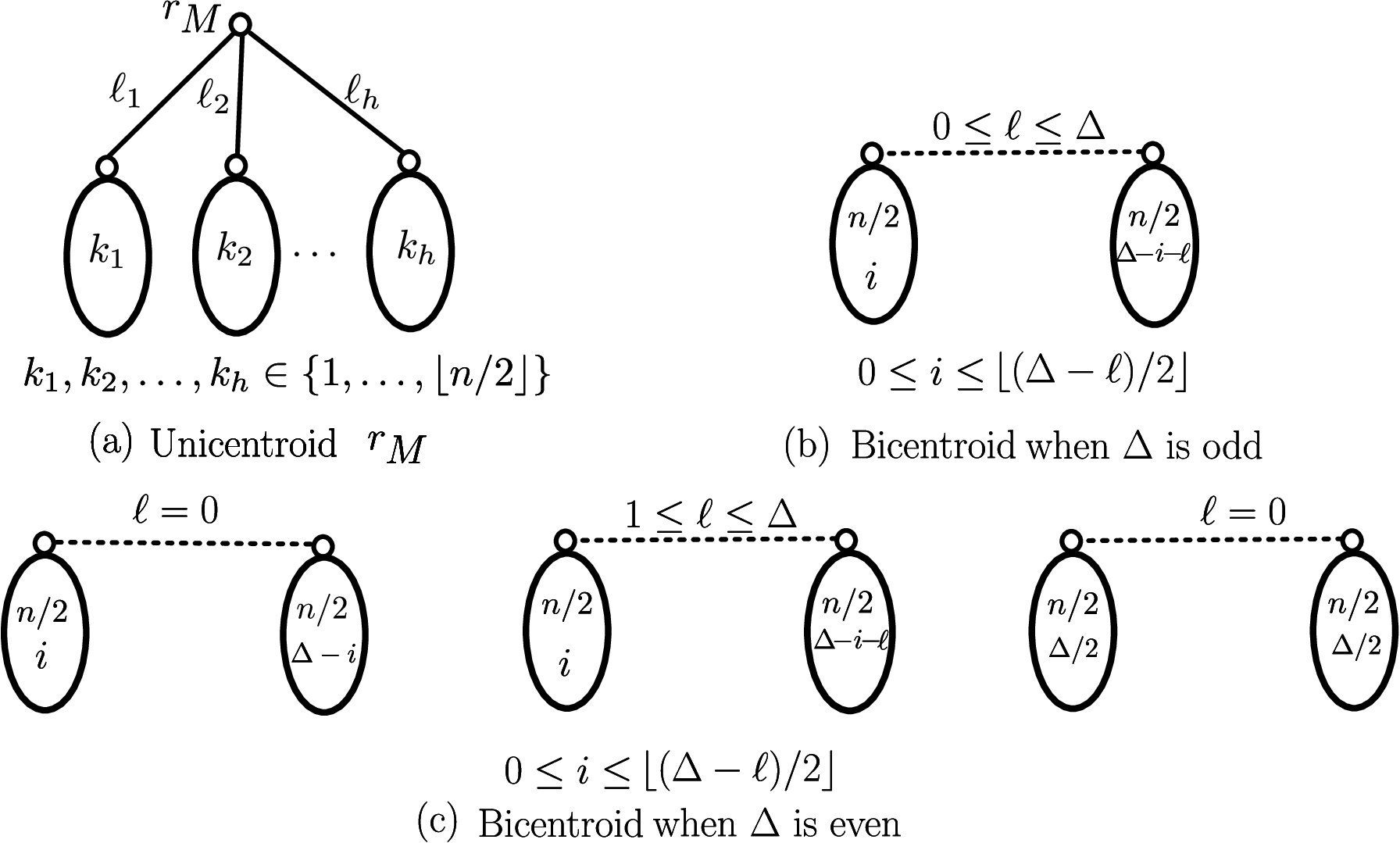}
    \caption{\comb{Illustration of multigraphs with unicentroid and bicentroid.}}
    \label{fig:centroid_bi-uni}
\end{figure}
  
  Let $n$ be an even integer. 
  Then any tree $M$ with $n$ vertices, 
  $\Delta$ multiple edges and a bicentroid 
    has two connected components, $X$ and $Y$ obtained  by the 
  removal of the bicentroid such that 
 $ X \in \mathcal{M}(n/2, i, n/2 - 1_{\leq}, i_{\leq}, i_{\leq})$ and 
 $Y \in \mathcal{M}(n/2, \Delta - i - \ell, n/2 - 1_{\leq},\Delta - i - \ell_{\leq},\Delta - i - \ell_{\leq})$
for some $0 \leq i \leq \lfloor \Delta/2 \rfloor$, 
where if $\Delta $ is even and $\ell = 0$ then for $i = \Delta /2$, both of the 
components $X$ and $Y$ belong to 
$\mathcal{M}(n/2, \Delta/2, n/2 - 1_{\leq}, \Delta/2_{\leq}, \Delta/2_{\leq})$. 
Note that for any $0 \leq i \leq \lfloor (\Delta-\ell)/2 \rfloor$, it holds that 
\[\mathcal{M}(n/2, i, n/2 - 1_{\leq}, i_{\leq}, i_{\leq}) \cap 
\mathcal{M}(n/2, \Delta - i - \ell, n/2 - 1_{\leq},\Delta - i - \ell_{\leq}, \Delta - i - \ell_{\leq}) = \emptyset.\] 
Therefore, when $\Delta$ is odd,  
the number of mutually non-isomorphic trees with $n$ vertices, 
$\Delta$ multiple edges, and 
 a bicentroid is \\

$\sum\limits_{i = 0}^{\mathclap{\lfloor (\Delta-\ell) /2 \rfloor }}\ \ \ \sum\limits_{\ell = 0}^{\mathclap{\Delta }}
m(n/2, i, n/2 - 1_{\leq}, i_{\leq}, i_{\leq}) $
 $m(n/2, \Delta - i - \ell, n/2 - 1_{\leq},\Delta - i - \ell_{\leq}, \Delta - i - \ell_{\leq})$. 
\comb{For odd $\Delta$, the representation of the bicentroid is given in Fig.~\ref{fig:centroid_bi-uni}(b).}

When $\Delta$ is even, the number of mutually non-isomorphic trees with $n$ vertices, $\Delta$ multiple edges, and a bicentroid is \\

$\sum\limits_{i = 0}^{\mathclap{\lfloor (\Delta-1) /2 \rfloor }}
m(n/2, i, n/2 - 1_{\leq}, i_{\leq}, i_{\leq}) 
 m(n/2, \Delta - i, n/2 - 1_{\leq},\Delta - i_{\leq}, \Delta - i_{\leq}) +$ \\
 
$ \sum\limits_{i = 0}^{\mathclap{\lfloor (\Delta-\ell) /2 \rfloor }} \ \ \  \sum\limits_{\ell = 1}^{\mathclap{\Delta }} m(n/2, i, n/2 - 1_{\leq}, i_{\leq}, i_{\leq}) 
 m(n/2, \Delta - i - \ell, n/2 - 1_{\leq},\Delta - i - \ell_{\leq}, \Delta - i - \ell_{\leq}) +$ 
 
 \begin{equation*}
     \binom{m(n/2, \Delta/2, n/2 - 1_{\leq}, \Delta/2_\leq, \Delta/2_\leq) + 1}{2}. \\ 
 \end{equation*}
          
 \comb{For even $\Delta$, the representation of the bicentroid is given in Fig.~\ref{fig:centroid_bi-uni}(c).}          
 
Thus, when $\Delta$ is odd, the number of mutually non-isomorphic trees with $n$ vertices and
$\Delta$ multiple edges is

 $m(n, \Delta, \lfloor (n-1)/2 \rfloor_{\leq}, \Delta_{\leq}, \Delta_{\leq})$+ \\

$\sum\limits_{i = 0}^{\mathclap{\lfloor (\Delta-\ell) /2 \rfloor }} \ \ \ \sum\limits_{\ell = 0}^{\mathclap{\Delta }}
m(n/2, i, n/2 - 1_{\leq}, i_{\leq}, i_{\leq}) 
 m(n/2, \Delta - i - \ell, n/2 - 1_{\leq},\Delta - i - \ell_{\leq}, \Delta - i - \ell_{\leq}).$\\
 
 When $\Delta$ is even, the number of mutually non-isomorphic trees with $n$ vertices and
$\Delta$ multiple edges is \\
$m(n, \Delta, \lfloor (n-1)/2 \rfloor_{\leq}, \Delta_{\leq}, \Delta_{\leq})$ + \\

$ \sum\limits_{i = 0}^{\mathclap{\lfloor (\Delta-1) /2 \rfloor }} 
m(n/2, i, n/2 - 1_{\leq}, i_{\leq}, i_{\leq})\ 
m(n/2, \Delta - i, n/2 - 1_{\leq}, \Delta - i_{\leq}, \Delta - i_{\leq}) +$ \\ 

$\sum\limits_{i = 0}^{\mathclap{\lfloor (\Delta-\ell) /2 \rfloor }} \ \ \ \sum\limits_{\ell = 1}^{\mathclap{\Delta }} m(n/2, i, n/2 - 1_{\leq}, i_{\leq}, i_{\leq})\ 
m(n/2, \Delta - i - \ell, n/2 - 1_{\leq}, \Delta - i -\ell_{\leq}, \Delta - i - \ell_{\leq})+$
 
 \begin{equation*}
 \binom{m(n/2, \Delta/2, n/2 - 1_{\leq}, \Delta/2_\leq, \Delta/2_\leq) + 1}{2}.
\end{equation*}

\end{proof}
\subsection{Experimental results}\label{exp}
We implemented the proposed algorithm using Python 3.10 and tested it for counting tree-like multigraphs with a given number of vertices and multiple edges. The tests were conducted on a machine with Processor: 12th Gen, Core i7(1.7 GHz) and Memory: 16 GB RAM. 
The experimental results presented in Table~\ref{tab_time} demonstrate that the proposed method efficiently 
counted tree-like multigraphs with up to  $170$ vertices and $50$ multiple edges in atmost $930$ seconds. 

\begin{table}[h!]
\caption{Experimental result of the counting method. }
\centering
\label{tab_time}
\begin{tabular}{>{$}c<{$} | S[table-format=2.2e1] | S}
    \toprule
    \boldsymbol{(n, \Delta)} & \textbf{Number of Tree-Like Multigraphs} & \textbf{Time [sec.]} \\
    \midrule
$(8, 5)$ & \num{4.41e3} & 0.0115 \\
$(15, 10)$ & \num{2.93e+9} &0.2371\\
$(18, 13)$ & \num{2.40e+12} &0.5885 \\
$(22, 15)$ & \num{4.41e+15} &1.6876 \\
$(25, 18)$ & \num{4.21e+18} & 3.2512\\
$(30, 20)$ & \num{4.05e+22} &5.9189 \\
$(35, 15)$ & \num{4.43e+23} & 3.4999\\
$(40, 20)$ & \num{8.54e+28} & 10.6742\\
$(43, 33)$ & \num{6.61e+35} &49.2939\\
$(45, 36)$ & \num{1.56e+38} &69.7175 \\
$(50, 44)$ & \num{1.77e+44} &175.0787 \\
$(66, 48)$ & \num{5.17e+56} &407.9641\\
$(72, 50)$ & \num{3.16e+61} &549.2499\\
$(80, 40)$ & \num{4.52e+62} &342.5029\\
$(84, 45)$ & \num{2.35e+67} &549.1110\\
$(90, 50)$ & \num{2.14e+73} &821.8740\\
$(95, 50)$ & \num{3.32e+76} &932.2500\\
$(100, 40)$ & \num{7.83e+74} &557.7977\\
$(120, 35)$ & \num{5.95e+83} &522.7298\\
$(150, 30)$ & \num{1.11e+97} &545.3231\\
$(170, 25)$ & \num{5.92e+103} &426.8230\\
\bottomrule
\end{tabular}
\end{table}
\section{Conclusion}\label{conclusion}
We proposed an algorithm based on dynamic programming to count the number of mutually distinct tree-like multigraphs with $n$ vertices and $\Delta$ multiple edges for $n \geq 1$ and $\Delta \geq 0.$ To achieve this, we define a specific ordering to avoid isomorphism. We then formulated recursive relations and identified corresponding initial conditions. Using these recursive relations, we developed a DP algorithm. The algorithm was implemented and tested on several instances, and results clearly demonstrate the efficiency of the proposed method. A natural direction for future research is to count the number of mutually distinct tree-like multigraphs with a prescribed number of vertices, multiple edges and self-loops. Additionally, it would be interesting to design an efficient algorithm for generating tree-like mulitgraphs by using the counting results established in this study.

\bibliography{ref}


\begin{thebibliography}{30}
\ifx \bisbn   \undefined \def \bisbn  #1{ISBN #1}\fi
\ifx \binits  \undefined \def \binits#1{#1}\fi
\ifx \bauthor  \undefined \def \bauthor#1{#1}\fi
\ifx \batitle  \undefined \def \batitle#1{#1}\fi
\ifx \bjtitle  \undefined \def \bjtitle#1{#1}\fi
\ifx \bvolume  \undefined \def \bvolume#1{\textbf{#1}}\fi
\ifx \byear  \undefined \def \byear#1{#1}\fi
\ifx \bissue  \undefined \def \bissue#1{#1}\fi
\ifx \bfpage  \undefined \def \bfpage#1{#1}\fi
\ifx \blpage  \undefined \def \blpage #1{#1}\fi
\ifx \burl  \undefined \def \burl#1{\textsf{#1}}\fi
\ifx \doiurl  \undefined \def \doiurl#1{\url{https://doi.org/#1}}\fi
\ifx \betal  \undefined \def \betal{\textit{et al.}}\fi
\ifx \binstitute  \undefined \def \binstitute#1{#1}\fi
\ifx \binstitutionaled  \undefined \def \binstitutionaled#1{#1}\fi
\ifx \bctitle  \undefined \def \bctitle#1{#1}\fi
\ifx \beditor  \undefined \def \beditor#1{#1}\fi
\ifx \bpublisher  \undefined \def \bpublisher#1{#1}\fi
\ifx \bbtitle  \undefined \def \bbtitle#1{#1}\fi
\ifx \bedition  \undefined \def \bedition#1{#1}\fi
\ifx \bseriesno  \undefined \def \bseriesno#1{#1}\fi
\ifx \blocation  \undefined \def \blocation#1{#1}\fi
\ifx \bsertitle  \undefined \def \bsertitle#1{#1}\fi
\ifx \bsnm \undefined \def \bsnm#1{#1}\fi
\ifx \bsuffix \undefined \def \bsuffix#1{#1}\fi
\ifx \bparticle \undefined \def \bparticle#1{#1}\fi
\ifx \barticle \undefined \def \barticle#1{#1}\fi
\bibcommenthead
\ifx \bconfdate \undefined \def \bconfdate #1{#1}\fi
\ifx \botherref \undefined \def \botherref #1{#1}\fi
\ifx \url \undefined \def \url#1{\textsf{#1}}\fi
\ifx \bchapter \undefined \def \bchapter#1{#1}\fi
\ifx \bbook \undefined \def \bbook#1{#1}\fi
\ifx \bcomment \undefined \def \bcomment#1{#1}\fi
\ifx \oauthor \undefined \def \oauthor#1{#1}\fi
\ifx \citeauthoryear \undefined \def \citeauthoryear#1{#1}\fi
\ifx \endbibitem  \undefined \def \endbibitem {}\fi
\ifx \bconflocation  \undefined \def \bconflocation#1{#1}\fi
\ifx \arxivurl  \undefined \def \arxivurl#1{\textsf{#1}}\fi
\csname PreBibitemsHook\endcsname

\bibitem[\protect\citeauthoryear{Blum and Reymond}{2009}]{blum2009970}
\begin{barticle}
\bauthor{\bsnm{Blum}, \binits{L.C.}},
\bauthor{\bsnm{Reymond}, \binits{J.-L.}}:
\batitle{970 million druglike small molecules for virtual screening in the
  chemical universe database gdb-13}.
\bjtitle{Journal of the American Chemical Society}
\bvolume{131}(\bissue{25}),
\bfpage{8732}--\blpage{8733}
(\byear{2009}).
\bcomment{\url{https://doi.org/10.1021/ja902302h}}
\end{barticle}
\endbibitem

\bibitem[\protect\citeauthoryear{Lim et~al.}{2020}]{lim2020scaffold}
\begin{barticle}
\bauthor{\bsnm{Lim}, \binits{J.}},
\bauthor{\bsnm{Hwang}, \binits{S.-Y.}},
\bauthor{\bsnm{Moon}, \binits{S.}},
\bauthor{\bsnm{Kim}, \binits{S.}},
\bauthor{\bsnm{Kim}, \binits{W.Y.}}:
\batitle{Scaffold-based molecular design with a graph generative model}.
\bjtitle{Chemical science}
\bvolume{11}(\bissue{4}),
\bfpage{1153}--\blpage{1164}
(\byear{2020}).
\bcomment{\url{https://doi.org/10.1039/C9SC04503A}}
\end{barticle}
\endbibitem

\bibitem[\protect\citeauthoryear{Meringer and
  Schymanski}{2013}]{meringer2013small}
\begin{barticle}
\bauthor{\bsnm{Meringer}, \binits{M.}},
\bauthor{\bsnm{Schymanski}, \binits{E.L.}}:
\batitle{Small molecule identification with molgen and mass spectrometry}.
\bjtitle{Metabolites}
\bvolume{3}(\bissue{2}),
\bfpage{440}--\blpage{462}
(\byear{2013}).
\bcomment{\url{https://doi.org/10.3390/metabo3020440}}
\end{barticle}
\endbibitem

\bibitem[\protect\citeauthoryear{Fink and Reymond}{2007}]{fink2007virtual}
\begin{barticle}
\bauthor{\bsnm{Fink}, \binits{T.}},
\bauthor{\bsnm{Reymond}, \binits{J.-L.}}:
\batitle{Virtual exploration of the chemical universe up to 11 atoms of c, n,
  o, f: assembly of 26.4 million structures (110.9 million stereoisomers) and
  analysis for new ring systems, stereochemistry, physicochemical properties,
  compound classes, and drug discovery}.
\bjtitle{Journal of chemical information and modeling}
\bvolume{47}(\bissue{2}),
\bfpage{342}--\blpage{353}
(\byear{2007}).
\bcomment{\url{https://doi.org/10.1021/ci600423u}}
\end{barticle}
\endbibitem

\bibitem[\protect\citeauthoryear{Mauser and Stahl}{2007}]{mauser2007chemical}
\begin{barticle}
\bauthor{\bsnm{Mauser}, \binits{H.}},
\bauthor{\bsnm{Stahl}, \binits{M.}}:
\batitle{Chemical fragment spaces for de novo design}.
\bjtitle{Journal of chemical information and modeling}
\bvolume{47}(\bissue{2}),
\bfpage{318}--\blpage{324}
(\byear{2007}).
\bcomment{\url{https://doi.org/10.1021/ci6003652}}
\end{barticle}
\endbibitem

\bibitem[\protect\citeauthoryear{Faulon et~al.}{2003}]{faulon2003signature}
\begin{barticle}
\bauthor{\bsnm{Faulon}, \binits{J.-L.}},
\bauthor{\bsnm{Churchwell}, \binits{C.J.}},
\bauthor{\bsnm{Visco}, \binits{D.P.}}:
\batitle{The signature molecular descriptor. 2. enumerating molecules from
  their extended valence sequences}.
\bjtitle{Journal of Chemical Information and Computer Sciences}
\bvolume{43}(\bissue{3}),
\bfpage{721}--\blpage{734}
(\byear{2003}).
\bcomment{\url{https://doi.org/10.1021/ci020346o}}
\end{barticle}
\endbibitem

\bibitem[\protect\citeauthoryear{Hall et~al.}{1993}]{hall1993design}
\begin{barticle}
\bauthor{\bsnm{Hall}, \binits{L.H.}},
\bauthor{\bsnm{Dailey}, \binits{R.S.}},
\bauthor{\bsnm{Kier}, \binits{L.B.}}:
\batitle{Design of molecules from quantitative structure-activity relationship
  models. 3. role of higher order path counts: Path 3}.
\bjtitle{Journal of Chemical Information and Computer Sciences}
\bvolume{33}(\bissue{4}),
\bfpage{598}--\blpage{603}
(\byear{1993}).
\bcomment{\url{https://doi.org/10.1021/ci00014a012}}
\end{barticle}
\endbibitem

\bibitem[\protect\citeauthoryear{Benecke et~al.}{1995}]{benecke1995molgen+}
\begin{barticle}
\bauthor{\bsnm{Benecke}, \binits{C.}},
\bauthor{\bsnm{Grund}, \binits{R.}},
\bauthor{\bsnm{Hohberger}, \binits{R.}},
\bauthor{\bsnm{Kerber}, \binits{A.}},
\bauthor{\bsnm{Laue}, \binits{R.}},
\bauthor{\bsnm{Wieland}, \binits{T.}}:
\batitle{Molgen+, a generator of connectivity isomers and stereoisomers for
  molecular structure elucidation}.
\bjtitle{Analytica chimica acta}
\bvolume{314}(\bissue{3}),
\bfpage{141}--\blpage{147}
(\byear{1995}).
\bcomment{\url{https://doi.org/10.1016/0003-2670(95)00291-7}}
\end{barticle}
\endbibitem

\bibitem[\protect\citeauthoryear{Peironcely et~al.}{2012}]{peironcely2012omg}
\begin{barticle}
\bauthor{\bsnm{Peironcely}, \binits{J.E.}},
\bauthor{\bsnm{Rojas-Chert{\'o}}, \binits{M.}},
\bauthor{\bsnm{Fichera}, \binits{D.}},
\bauthor{\bsnm{Reijmers}, \binits{T.}},
\bauthor{\bsnm{Coulier}, \binits{L.}},
\bauthor{\bsnm{Faulon}, \binits{J.-L.}},
\bauthor{\bsnm{Hankemeier}, \binits{T.}}:
\batitle{Omg: open molecule generator}.
\bjtitle{Journal of cheminformatics}
\bvolume{4}(\bissue{1}),
\bfpage{1}--\blpage{13}
(\byear{2012}).
\bcomment{\url{https://doi.org/10.1186/1758-2946-4-21}}
\end{barticle}
\endbibitem

\bibitem[\protect\citeauthoryear{Akutsu and
  Nagamochi}{2013}]{akutsu2013comparison}
\begin{barticle}
\bauthor{\bsnm{Akutsu}, \binits{T.}},
\bauthor{\bsnm{Nagamochi}, \binits{H.}}:
\batitle{Comparison and enumeration of chemical graphs}.
\bjtitle{Computational and structural biotechnology journal}
\bvolume{5}(\bissue{6}),
\bfpage{201302004}
(\byear{2013}).
\bcomment{\url{https://doi.org/10.5936/csbj.201302004}}
\end{barticle}
\endbibitem

\bibitem[\protect\citeauthoryear{Cayley}{1875}]{cayley1875analytical}
\begin{bbook}
\bauthor{\bsnm{Cayley}, \binits{A.}}:
\bbtitle{On the Analytical Forms Called Trees: With Application to the Theory
  of Chemical Combinations}.
\bpublisher{Cambridge University Press},
\blocation{Cambridge}
(\byear{1875}).
\bcomment{\url{https://doi.org/10.1017/CBO9780511703751.056}}
\end{bbook}
\endbibitem

\bibitem[\protect\citeauthoryear{Zhu et~al.}{2022}]{zhu2022molecular}
\begin{botherref}
\oauthor{\bsnm{Zhu}, \binits{J.}},
\oauthor{\bsnm{Azam}, \binits{N.A.}},
\oauthor{\bsnm{Cao}, \binits{S.}},
\oauthor{\bsnm{Ido}, \binits{R.}},
\oauthor{\bsnm{Haraguchi}, \binits{K.}},
\oauthor{\bsnm{Zhao}, \binits{L.}},
\oauthor{\bsnm{Nagamochi}, \binits{H.}},
\oauthor{\bsnm{Akutsu}, \binits{T.}}:
Molecular design based on integer programming and quadratic descriptors in a
  two-layered model
(2022).
\href{https://arxiv.org/abs/2209.13527}{arXiv:2209.13527}
\end{botherref}
\endbibitem

\bibitem[\protect\citeauthoryear{Polya and Read}{2012}]{polya2012combinatorial}
\begin{bbook}
\bauthor{\bsnm{Polya}, \binits{G.}},
\bauthor{\bsnm{Read}, \binits{R.C.}}:
\bbtitle{Combinatorial Enumeration of Groups, Graphs, and Chemical Compounds}.
\bpublisher{Springer},
\blocation{New York}
(\byear{2012}).
\bcomment{\url{https://doi.org/10.1007/978-1-4612-4664-0}}
\end{bbook}
\endbibitem

\bibitem[\protect\citeauthoryear{P{\'o}lya}{1937}]{polya1937kombinatorische}
\begin{botherref}
\oauthor{\bsnm{P{\'o}lya}, \binits{G.}}:
Kombinatorische anzahlbestimmungen f{\"u}r gruppen, graphen und chemische
  verbindungen
(1937).
\url{https://doi.org/10.1007/BF02546665}
\end{botherref}
\endbibitem

\bibitem[\protect\citeauthoryear{Akutsu and
  Fukagawa}{2005}]{akutsu2005inferring}
\begin{bchapter}
\bauthor{\bsnm{Akutsu}, \binits{T.}},
\bauthor{\bsnm{Fukagawa}, \binits{D.}}:
\bctitle{Inferring a graph from path frequency}.
In: \bbtitle{Combinatorial Pattern Matching: 16th Annual Symposium, CPM 2005,
  Jeju Island, Korea, June 19-22, 2005. Proceedings 16},
pp. \bfpage{371}--\blpage{382}
(\byear{2005}).
\bcomment{Springer. \url{https://doi.org/10.1007/11496656_32}}
\end{bchapter}
\endbibitem

\bibitem[\protect\citeauthoryear{Akutsu and
  Fukagawa}{2007}]{akutsu2007inferring}
\begin{bchapter}
\bauthor{\bsnm{Akutsu}, \binits{T.}},
\bauthor{\bsnm{Fukagawa}, \binits{D.}}:
\bctitle{Inferring a chemical structure from a feature vector based on
  frequency of labeled paths and small fragments}.
In: \bbtitle{Proceedings Of The 5th Asia-Pacific Bioinformatics Conference},
pp. \bfpage{165}--\blpage{174}
(\byear{2007}).
\bcomment{World Scientific. \url{https://doi.org/10.1142/9781860947995_0019}}
\end{bchapter}
\endbibitem

\bibitem[\protect\citeauthoryear{Fujiwara
  et~al.}{2008}]{fujiwara2008enumerating}
\begin{barticle}
\bauthor{\bsnm{Fujiwara}, \binits{H.}},
\bauthor{\bsnm{Wang}, \binits{J.}},
\bauthor{\bsnm{Zhao}, \binits{L.}},
\bauthor{\bsnm{Nagamochi}, \binits{H.}},
\bauthor{\bsnm{Akutsu}, \binits{T.}}:
\batitle{Enumerating treelike chemical graphs with given path frequency}.
\bjtitle{Journal of Chemical Information and Modeling}
\bvolume{48}(\bissue{7}),
\bfpage{1345}--\blpage{1357}
(\byear{2008}).
\bcomment{\url{https://doi.org/10.1021/ci700385a}}
\end{barticle}
\endbibitem

\bibitem[\protect\citeauthoryear{Nakano and Uno}{2003}]{nakano2003efficient}
\begin{barticle}
\bauthor{\bsnm{Nakano}, \binits{S.-i.}},
\bauthor{\bsnm{Uno}, \binits{T.}}:
\batitle{Efficient generation of rooted trees}.
\bjtitle{National Institute for Informatics (Japan), Tech. Rep. NII-2003-005E}
\bvolume{8},
\bfpage{4}--\blpage{63}
(\byear{2003})
\end{barticle}
\endbibitem

\bibitem[\protect\citeauthoryear{Nakano and Uno}{2005}]{nakano2005generating}
\begin{bchapter}
\bauthor{\bsnm{Nakano}, \binits{S.-i.}},
\bauthor{\bsnm{Uno}, \binits{T.}}:
\bctitle{Generating colored trees}.
In: \bbtitle{Graph-Theoretic Concepts in Computer Science: 31st International
  Workshop, WG 2005, Metz, France, June 23-25, 2005, Revised Selected Papers
  31},
pp. \bfpage{249}--\blpage{260}
(\byear{2005}).
\bcomment{Springer. \url{https://doi.org/10.1007/11604686_22}}
\end{bchapter}
\endbibitem

\bibitem[\protect\citeauthoryear{Aringhieri
  et~al.}{2003}]{aringhieri2003chemical}
\begin{barticle}
\bauthor{\bsnm{Aringhieri}, \binits{R.}},
\bauthor{\bsnm{Hansen}, \binits{P.}},
\bauthor{\bsnm{Malucelli}, \binits{F.}}:
\batitle{Chemical trees enumeration algorithms}.
\bjtitle{Quarterly Journal of the Belgian, French and Italian Operations
  Research Societies}
\bvolume{1},
\bfpage{67}--\blpage{83}
(\byear{2003}).
\bcomment{\url{http://dx.doi.org/10.1007/s10288-002-0008-9}}
\end{barticle}
\endbibitem

\bibitem[\protect\citeauthoryear{Ishida et~al.}{2008}]{ishida2008improved}
\begin{barticle}
\bauthor{\bsnm{Ishida}, \binits{Y.}},
\bauthor{\bsnm{Zhao}, \binits{L.}},
\bauthor{\bsnm{Nagamochi}, \binits{H.}},
\bauthor{\bsnm{Akutsu}, \binits{T.}}:
\batitle{Improved algorithms for enumerating tree-like chemical graphs with
  given path frequency}.
\bjtitle{Genome Informatics}
\bvolume{21},
\bfpage{53}--\blpage{64}
(\byear{2008}).
\bcomment{\url{https://doi.org/10.1142/9781848163324_0005}}
\end{barticle}
\endbibitem

\bibitem[\protect\citeauthoryear{Shimizu et~al.}{2011}]{shimizu2011enumerating}
\begin{bchapter}
\bauthor{\bsnm{Shimizu}, \binits{M.}},
\bauthor{\bsnm{Nagamochi}, \binits{H.}},
\bauthor{\bsnm{Akutsu}, \binits{T.}}:
\bctitle{Enumerating tree-like chemical graphs with given upper and lower
  bounds on path frequencies}.
In: \bbtitle{BMC Bioinformatics},
vol. \bseriesno{12},
pp. \bfpage{1}--\blpage{9}
(\byear{2011}).
\bcomment{Springer. \url{https://doi.org/10.1186/1471-2105-12-S14-S3}}
\end{bchapter}
\endbibitem

\bibitem[\protect\citeauthoryear{Suzuki et~al.}{2012}]{suzuki20122}
\begin{barticle}
\bauthor{\bsnm{Suzuki}, \binits{M.}},
\bauthor{\bsnm{Nagamochi}, \binits{H.}},
\bauthor{\bsnm{Akutsu}, \binits{T.}}:
\batitle{A 2-phase algorithm for enumerating tree-like chemical graphs
  satisfying given upper and lower bounds}.
\bjtitle{IPSJ SIG Tech. Reports}
\bvolume{28}(\bissue{17}),
\bfpage{1}--\blpage{8}
(\byear{2012})
\end{barticle}
\endbibitem

\bibitem[\protect\citeauthoryear{Bellman}{1957}]{bellman1957dynamic}
\begin{bbook}
\bauthor{\bsnm{Bellman}, \binits{R.}}:
\bbtitle{Dynamic Programming}.
\bpublisher{Princeton University Press},
\blocation{New Jersey}
(\byear{1957})
\end{bbook}
\endbibitem

\bibitem[\protect\citeauthoryear{Akutsu et~al.}{2012}]{akutsu2012inferring}
\begin{barticle}
\bauthor{\bsnm{Akutsu}, \binits{T.}},
\bauthor{\bsnm{Fukagawa}, \binits{D.}},
\bauthor{\bsnm{Jansson}, \binits{J.}},
\bauthor{\bsnm{Sadakane}, \binits{K.}}:
\batitle{Inferring a graph from path frequency}.
\bjtitle{Discrete Applied Mathematics}
\bvolume{160}(\bissue{10-11}),
\bfpage{1416}--\blpage{1428}
(\byear{2012}).
\bcomment{\url{https://doi.org/10.1016/j.dam.2012.02.002}}
\end{barticle}
\endbibitem

\bibitem[\protect\citeauthoryear{Ru{\'e} et~al.}{2014}]{rue2014dynamic}
\begin{barticle}
\bauthor{\bsnm{Ru{\'e}}, \binits{J.}},
\bauthor{\bsnm{Sau}, \binits{I.}},
\bauthor{\bsnm{Thilikos}, \binits{D.M.}}:
\batitle{Dynamic programming for graphs on surfaces}.
\bjtitle{ACM Transactions on Algorithms (TALG)}
\bvolume{10}(\bissue{2}),
\bfpage{1}--\blpage{26}
(\byear{2014}).
\bcomment{\url{https://doi.org/10.1145/2556952}}
\end{barticle}
\endbibitem

\bibitem[\protect\citeauthoryear{Imada et~al.}{2011}]{imada2011efficient}
\begin{barticle}
\bauthor{\bsnm{Imada}, \binits{T.}},
\bauthor{\bsnm{Ota}, \binits{S.}},
\bauthor{\bsnm{Nagamochi}, \binits{H.}},
\bauthor{\bsnm{Akutsu}, \binits{T.}}:
\batitle{Efficient enumeration of stereoisomers of tree structured molecules
  using dynamic programming}.
\bjtitle{Journal of mathematical chemistry}
\bvolume{49},
\bfpage{910}--\blpage{970}
(\byear{2011}).
\bcomment{\url{https://doi.org/10.1007/s10910-010-9789-9}}
\end{barticle}
\endbibitem

\bibitem[\protect\citeauthoryear{Masui et~al.}{2009}]{masui2009enumeration}
\begin{botherref}
\oauthor{\bsnm{Masui}, \binits{R.}},
\oauthor{\bsnm{Shurbevski}, \binits{A.}},
\oauthor{\bsnm{Nagamochi}, \binits{H.}}:
Enumeration of unlabeled tree by dynamic programming.
Technical report,
Technica l Report
(2009)
\end{botherref}
\endbibitem

\bibitem[\protect\citeauthoryear{Azam et~al.}{2020}]{azam2020efficient}
\begin{barticle}
\bauthor{\bsnm{Azam}, \binits{N.A.}},
\bauthor{\bsnm{Shurbevski}, \binits{A.}},
\bauthor{\bsnm{Nagamochi}, \binits{H.}}:
\batitle{An efficient algorithm to count tree-like graphs with a given number
  of vertices and self-loops}.
\bjtitle{Entropy}
\bvolume{22}(\bissue{9}),
\bfpage{923}
(\byear{2020}).
\bcomment{\url{https://doi.org/10.3390/e22090923}}
\end{barticle}
\endbibitem

\bibitem[\protect\citeauthoryear{Jordan}{1869}]{jordan1869assemblages}
\begin{botherref}
\oauthor{\bsnm{Jordan}, \binits{C.}}:
Sur les assemblages de lignes.
(1869).
\url{https://doi.org/10.1515/crll.1869.70.185}
\end{botherref}
\endbibitem

\end{thebibliography}

\end{document}